\documentclass{sig-alternate-05-2015}

\usepackage[utf8]{inputenc}
\usepackage[english]{babel} 
\usepackage{hyperref} 
\usepackage[noabbrev,capitalize]{cleveref}  
\usepackage{mathtools}  
\usepackage{thmtools} 
\usepackage{thm-restate}  

\newtheorem{theorem}{Theorem}
\newtheorem{lemma}[theorem]{Lemma}
\newtheorem{proposition}[theorem]{Proposition}
\newtheorem{corollary}[theorem]{Corollary}

\newtheorem{definition}{Definition}


\newcounter{term}
\renewcommand*{\theterm}{\Alph{term}}

\AtBeginDocument{%
  \let\mylabel\label
}
\newcommand{\mytag}[1]{%
  \begingroup 
    \refstepcounter{term}%
    \mylabel{#1}%
    \text{\theterm}%
  \endgroup
}

\DeclarePairedDelimiter{\diagfences}{(}{)}
\newcommand{\diag}{\operatorname{diag}\diagfences}


\newcommand{\Reals}{\mathbb{R}}
\newcommand{\Complex}{\mathbb{C}}
\newcommand{\Rationals}{\mathbb{Q}}

\newcommand{\Naturals}{\mathbb{N}}

\newcommand{\myvector}{\boldsymbol}

\newcommand{\NP}{\textbf{NP}}

\newcommand{\PSPACE}{\textbf{PSPACE}}

\begin{document}

\CopyrightYear{2017}
\setcopyright{rightsretained}
\conferenceinfo{HSCC'17}{April 18-20, 2017, Pittsburgh, PA, USA}
\isbn{978-1-4503-4590-3/17/04}
\doi{http://dx.doi.org/10.1145/3049797.3049798}

\title{On the Polytope Escape Problem for\\ Continuous Linear Dynamical Systems}
%
%
%
%
%

\numberofauthors{3} 
%
%
%

\author{
\alignauthor{
Jo\"{e}l Ouaknine\\
\affaddr{MPI-SWS and Oxford U.}\\
\email{joel@mpi-sws.org}}
\alignauthor{
Jo\~{a}o Sousa-Pinto\\
\affaddr{Oxford U.}\\
\email{jspinto@cs.ox.ac.uk}}
\alignauthor{
James Worrell\\
\affaddr{Oxford U.}\\
\email{jbw@cs.ox.ac.uk}}
}

\maketitle

\begin{abstract}

  The Polytope Escape Problem for continuous linear dynamical
  systems consists of deciding, given an affine function
  $f:\mathbb{R}^{d}\rightarrow \mathbb{R}^{d}$ and a convex polytope
  $\mathcal{P}\subseteq\mathbb{R}^d$, both with rational descriptions,
  whether there exists an initial point
  $\boldsymbol{x}_0$ in $\mathcal{P}$ such that the trajectory of the unique
  solution to the differential equation
\begin{equation*}
\begin{displaystyle} \begin{cases}
\dot{\boldsymbol{x}}(t)=f(\boldsymbol{x}(t)) \\
\boldsymbol{x}(0)=\boldsymbol{x}_{0}
\end{cases} \end{displaystyle}
\end{equation*}
is entirely contained in $\mathcal{P}$.  We show that this problem is
reducible in polynomial time to the decision version of linear
programming with real algebraic coefficients.  The latter is a special
case of the decision problem for the existential theory of real closed
fields, which is known to lie between \NP{} and
\PSPACE{}.  Our algorithm makes use of spectral techniques and
relies, among others, on tools from Diophantine approximation.
\end{abstract}


\begin{CCSXML}
<ccs2012>
<concept>
<concept_id>10003752.10003753.10003765</concept_id>
<concept_desc>Theory of computation~Timed and hybrid models</concept_desc>
<concept_significance>500</concept_significance>
</concept>
</ccs2012>
\end{CCSXML}

\ccsdesc[500]{Theory of computation~Timed and hybrid models}

\printccsdesc

\keywords{Orbit Problem; Continuous Linear Dynamical Systems}

\section{Introduction}

In ambient space $\Reals^{d}$, a \emph{continuous linear
  dynamical system} is a trajectory $\myvector{x}(t)$, where $t$
ranges over the non-negative reals, defined by a differential equation
$\dot{\myvector{x}}(t)=f(\myvector{x}(t))$ in which the function
$f$ is \emph{affine} or \emph{linear}. If the initial point
$\myvector{x}(0)$ is given, the differential equation uniquely
defines the entire trajectory. (Linear) dynamical systems have been
extensively studied in Mathematics, Physics, and Engineering, and more
recently have played an increasingly important role in Computer
Science, notably in the modelling and analysis of cyber-physical
systems; a recent and authoritative textbook on the matter
is~\cite{Alu15}.

In the study of dynamical systems, particularly from the perspective
of control theory, considerable attention has been given to the study
of \emph{invariant sets}, i.e., subsets of $\Reals^{d}$ from which
no trajectory can escape; see, e.g.,
\cite{CastelanH92,BlondelT00,BM07,SDI08}. Our focus in the present
chapter is on sets with the dual property that \emph{no trajectory
  remains trapped}. Such sets play a key role in analysing
\emph{liveness} properties in cyber-physical systems (see, for
instance,~\cite{Alu15}): discrete progress is ensured by
guaranteeing that all trajectories (i.e., from any initial starting
point) must eventually reach a point at which they `escape'
(temporarily or permanently) the set in question.

More precisely, given an affine function
$f:\Reals^{d}\rightarrow \Reals^{d}$ and a convex polytope
$\mathcal{P}\subseteq\Reals^{d}$, both specified using rational
coefficients encoded in binary, we consider the \emph{Polytope
  Escape Problem} which asks whether there is some point
$\myvector{x}_0$ in $\mathcal{P}$ for which the corresponding
trajectory of the solution to the differential equation
\begin{equation*}
\begin{displaystyle} \begin{cases}
\dot{\myvector{x}}(t)=f(\myvector{x}(t)) \\
\myvector{x}(0)=\myvector{x}_{0}
\end{cases} \end{displaystyle}
\end{equation*}
is entirely contained in $\mathcal{P}$. Our main result is to show
that this problem is decidable by reducing it in polynomial time to
the decision version of linear programming with real algebraic
coefficients, which itself reduces in polynomial time to deciding the
truth of a sentence in the first-order theory of the reals: a problem
whose complexity is known to lie between $\NP$ and
$\PSPACE$ \cite{Canny88}. Our algorithm makes use of spectral
techniques and relies among others on tools from Diophantine
approximation.

It is interesting to note that a seemingly closely related problem,
that of determining whether a given trajectory of a linear dynamical
system ever hits a given hyperplane (also known as the
\emph{continuous Skolem Problem}), is not known to be decidable; see,
in particular,~\cite{ContinuousSkolem,ContinuousSkolem3,COW16b:LICS16}. When the
target is instead taken to be a single point (rather than a
hyperplane), the corresponding reachability question (known as the
\emph{continuous Orbit Problem}) can be decided in polynomial
time~\cite{Hainry08}.

\section{Mathematical Background}

\subsection{Kronecker's Theorem}
Let $\mathbb{T}$ denote the group of complex numbers of modulus $1$,
with multiplication as group operation.  Then the function
$\phi:\Reals \rightarrow \mathbb{T}$ given by
$\phi(x)=\exp(2\pi i x)$ is a homomorphism from the additive
group of real numbers to $\mathbb{T}$, with kernel the subgroup of
integers.

Recall from~\cite{HardyAndWright} the following classical theorem of
Kronecker on simultaneous inhomogeneous Diophantine approximation.
\begin{theorem}[Kronecker]
  Let $\theta_1,\ldots,\theta_s$ be real numbers such that the set
  $\{\theta_1,\ldots,\theta_s,1\}$ is linearly independent over
  $\mathbb{Q}$.  Then for all $\psi_1,\ldots,\psi_s \in \Reals$
  and $\varepsilon > 0$, there exists a positive integer $n$
and integers $n_1,\ldots,n_s$ such that
\[ |n\theta_1 - \psi_1 - n_1| < \varepsilon, \ldots ,
   |n\theta_s - \psi_s - n_s| < \varepsilon \, .\]
\end{theorem}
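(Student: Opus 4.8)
The plan is to recast the conclusion as a density statement on the $s$-dimensional torus $\mathbb{T}^s$ and then exploit the structure of the closed subgroups of $\mathbb{T}^s$. Write $\alpha_j = \phi(\theta_j) = \exp(2\pi i\theta_j)\in\mathbb{T}$ and let $\alpha = (\alpha_1,\ldots,\alpha_s)\in\mathbb{T}^s$, so that $\alpha^n = (\exp(2\pi i n\theta_1),\ldots,\exp(2\pi i n\theta_s))$. For a fixed $n$, the existence of integers $n_1,\ldots,n_s$ with $|n\theta_j-\psi_j-n_j|<\varepsilon$ for all $j$ says exactly that each $n\theta_j$ is within $\varepsilon$ of $\psi_j$ modulo $1$, i.e.\ that $\alpha^n$ lies within arc-distance $2\pi\varepsilon$ of $(\exp(2\pi i\psi_1),\ldots,\exp(2\pi i\psi_s))$ in $\mathbb{T}^s$. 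Since the point $(\exp(2\pi i\psi_1),\ldots,\exp(2\pi i\psi_s))$ ranges over all of $\mathbb{T}^s$ as $(\psi_1,\ldots,\psi_s)$ ranges over $\Reals^s$, Kronecker's theorem is equivalent to the statement that the forward orbit $O^{+} = \{\alpha^n : n\geq 1\}$ is dense in $\mathbb{T}^s$.

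First I would reduce to the two-sided orbit. The closure $\overline{O^{+}}$ is a closed sub-semigroup of the compact group $\mathbb{T}^s$, and a routine recurrence argument shows it is actually a subgroup: passing to a convergent subsequence $\alpha^{n_k}$ produces $\alpha^{n_{k+1}-n_k}\to e$ (with $n_{k+1}-n_k\geq 1$), hence the identity $e$ lies in $\overline{O^{+}}$, and then so does $\alpha^{-1}$ and every $\alpha^{n}$ with $n\in\Integers$. Thus $\overline{O^{+}} = G$, where $G := \overline{\{\alpha^n : n\in\Integers\}}$ is the closed subgroup of $\mathbb{T}^s$ generated by $\alpha$, and it suffices to prove $G = \mathbb{T}^s$.

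Next I would establish $G = \mathbb{T}^s$ via a character argument. Suppose instead that $G$ is a proper closed subgroup of $\mathbb{T}^s$. Then the quotient $\mathbb{T}^s/G$ is a non-trivial compact abelian group, so by Pontryagin duality it admits a non-trivial continuous character; pulling back, there is a non-trivial character of $\mathbb{T}^s$ that is trivial on $G$. The characters of $\mathbb{T}^s$ are precisely the maps $(\zeta_1,\ldots,\zeta_s)\mapsto \zeta_1^{k_1}\cdots\zeta_s^{k_s}$ with $(k_1,\ldots,k_s)\in\Integers^s$, so we obtain a non-zero integer vector $(k_1,\ldots,k_s)$ with $\alpha_1^{nk_1}\cdots\alpha_s^{nk_s} = 1$ for all $n$, i.e.\ $\exp(2\pi i n(k_1\theta_1+\cdots+k_s\theta_s)) = 1$. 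Taking $n = 1$ forces $k_1\theta_1+\cdots+k_s\theta_s\in\Integers$, a non-trivial $\Rationals$-linear dependence among $\theta_1,\ldots,\theta_s,1$, contradicting the hypothesis. Hence $G = \mathbb{T}^s$, and the theorem follows.

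The main obstacle is the structural input in the last step: the classification of closed subgroups of $\mathbb{T}^s$, equivalently the fact that characters separate points on a compact abelian group (Pontryagin duality). Everything else is bookkeeping. An equivalent packaging of the same input is Weyl's equidistribution criterion: the sequence $(n\theta_1,\ldots,n\theta_s)_{n\geq 1}$ is equidistributed modulo $1$ in $[0,1)^s$ because for each non-zero $(k_1,\ldots,k_s)\in\Integers^s$ the exponential sum $\frac{1}{N}\sum_{n=1}^{N}\exp(2\pi i n(k_1\theta_1+\cdots+k_s\theta_s))$ is a geometric series of modulus $O(1/N)$, using that $k_1\theta_1+\cdots+k_s\theta_s\notin\Integers$; equidistribution then gives the stronger conclusion that the proportion of admissible $n\leq N$ tends to the positive volume of the target box. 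There is also a fully elementary induction on $s$, as in Hardy and Wright, that avoids duality altogether but is longer and less transparent; I would present the torus/duality argument above.
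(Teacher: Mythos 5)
Your argument is correct, but note that the paper does not prove this theorem at all: it is recalled verbatim from Hardy and Wright as a classical black box, so there is no in-paper proof to compare against. What you give is the standard topological-group proof — reformulate the conclusion as density of the forward orbit of $\alpha=(\phi(\theta_1),\ldots,\phi(\theta_s))$ in $\mathbb{T}^s$, upgrade the closed orbit-closure from a semigroup to the closed subgroup $G$ generated by $\alpha$, and rule out $G\subsetneq\mathbb{T}^s$ by Pontryagin duality plus the classification of characters of the torus — as opposed to the elementary induction on $s$ in the cited source. The duality route is shorter and more conceptual but imports the fact that characters of a compact abelian group separate points from closed subgroups; the Hardy–Wright induction is self-contained. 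One step you should tighten: in the recurrence argument, after extracting a convergent subsequence $\alpha^{n_k}$ you conclude $\alpha^{n_{k+1}-n_k}\to e$ and then assert that $\alpha^{-1}\in\overline{O^{+}}$. This needs the gaps $m_k=n_{k+1}-n_k$ to be unbounded (so that $\alpha^{m_k-1}$, or more generally $\alpha^{n+m_k}$ for any fixed $n\in\Integers$, eventually lies in $O^{+}$); if the gaps were bounded, some value $g\geq 1$ would recur infinitely often and force $\alpha^{g}=e$, i.e.\ $g\theta_j\in\Integers$ for all $j$, which your hypothesis of $\Rationals$-linear independence of $\{\theta_1,\ldots,\theta_s,1\}$ excludes. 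With that one line added, the proof is complete. Your closing remark that Weyl's criterion yields the stronger equidistribution statement is also accurate, and is in fact closer in spirit to the averaging argument the paper later uses in Lemma~\ref{lem:first_bound}.
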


We obtain the following simple corollary:
\begin{corollary}
  Let $\theta_1,\ldots,\theta_s$ be real numbers such that the set
  $\{ \theta_1,\ldots,\theta_s,1\}$ is linearly independent over
  $\mathbb{Q}$.  Then
\[ \{ (\phi(n\theta_1),\ldots,\phi(n\theta_s)) : n \in \Naturals \} \]
is a dense subset of $\mathbb{T}^s$.
\label{corl:kronecker}
\end{corollary}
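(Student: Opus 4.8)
The plan is to deduce this directly from Kronecker's theorem by passing from additive simultaneous approximation to the multiplicative setting of $\mathbb{T}^s$ via the homomorphism $\phi$. First I would observe that $\phi$ is surjective onto $\mathbb{T}$, so an arbitrary point of $\mathbb{T}^s$ can be written as $(\phi(\psi_1),\ldots,\phi(\psi_s))$ for suitable reals $\psi_1,\ldots,\psi_s$; density amounts to showing that for every $\varepsilon>0$ there is $n\in\Naturals$ with $|\phi(n\theta_j)-\phi(\psi_j)|<\varepsilon$ for all $j$.

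Next I would exploit uniform continuity of $\phi$ at $0$: since $\phi(0)=1$, there is $\delta>0$ such that $|x|<\delta$ implies $|\phi(x)-1|<\varepsilon$. Applying Kronecker's theorem with this $\delta$ (legitimate because $\{\theta_1,\ldots,\theta_s,1\}$ is linearly independent over $\mathbb{Q}$) yields a positive integer $n$ and integers $n_1,\ldots,n_s$ with $|n\theta_j-\psi_j-n_j|<\delta$ for every $j$. The key algebraic step is then that the $n_j$ lie in the kernel of $\phi$, so that $\phi(n\theta_j-\psi_j-n_j)=\phi(n\theta_j)\,\phi(\psi_j)^{-1}$, and this quantity is within $\varepsilon$ of $1$ by the choice of $\delta$. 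Since $\phi(\psi_j)$ has modulus $1$, multiplication by it is an isometry of $\Complex$, so $|\phi(n\theta_j)-\phi(\psi_j)|=|\phi(n\theta_j)\phi(\psi_j)^{-1}-1|<\varepsilon$, which is exactly what is required.

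There is no substantive obstacle here: the argument is essentially a change of coordinates. The only points needing minor care are the remark that every element of $\mathbb{T}^s$ is in the image of the componentwise map induced by $\phi$, the use of the homomorphism identity $\phi(a-b)=\phi(a)\phi(b)^{-1}$ together with $\ker\phi=\Integers$, and the observation that $\varepsilon$ can be translated between the additive and multiplicative settings uniformly in the target point, so that the same $n$ works simultaneously for all $s$ coordinates.
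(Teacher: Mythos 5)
Your proposal is correct and follows essentially the same route as the paper: lift an arbitrary target point to $(\phi(\psi_1),\ldots,\phi(\psi_s))$ via surjectivity of $\phi$, invoke Kronecker's theorem, and conclude by continuity of $\phi$. You merely spell out the continuity step in more detail (via the kernel $\Integers$ and the isometry of multiplication by a unimodular number), which the paper leaves implicit.
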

\begin{proof}
  Since $\phi$ is surjective, an arbitrary element of $\mathbb{T}^s$
  can be written in the form $(\phi(\psi_1),\ldots,\phi(\psi_s))$ for
  some real numbers $\psi_1,\ldots,\psi_s$.
Applying Kronecker's Theorem, we get that
for all $\varepsilon > 0$, there exists a positive integer $n$
and integers $n_1,\ldots,n_s$ such that
\[ |n\theta_1 - \psi_1 - n_1| < \varepsilon, \ldots , |n\theta_s
  - \psi_s - n_s| < \varepsilon \, .\] By continuity of $\phi$ it
follows that $(\phi(\psi_1),\ldots,\phi(\psi_s))$ is a limit point of
$ \{ (\phi(n\theta_1),\ldots,\phi(n\theta_s)) : n \in \Naturals \}$.
This establishes the result.
\end{proof}

\subsection{Laurent polynomials}

A multivariate \emph{Laurent polynomial} is a polynomial in positive
and negative powers of variables $z_1,\ldots,z_s$ with complex
coefficients.  We are interested in Laurent polynomials of the special form
\[ g = \sum_{j=1}^k \left( c_j {z_1}^{n_{1,j}}\ldots {z_s}^{n_{s,j}} +
    \overline{c_j} {z_1}^{-n_{1,j}}\ldots {z_s}^{-n_{s,j}} \right) \,
  ,\] where $c_1,\ldots,c_k \in \Complex$ and
$n_{1,1},\ldots,n_{s,k} \in \mathbb{Z}$.  We call such $g$
\emph{self-conjugate Laurent polynomials}.  Notice that if
$a_1,\ldots,a_s \in \mathbb{T}$ then $g(a_1,\ldots,a_s)$ is a real
number, so we may regard $g$ as a function from $\mathbb{T}^s$ to
$\Reals$.

\begin{lemma}
\label{lem:first_bound}
Let $g \in \Complex[z^{\pm 1}_1,\ldots,z^{\pm 1}_s]$ be a
self-conjugate Laurent polynomial that has no constant term.
Given real numbers $\theta_1,\ldots,\theta_s \in \Reals$ such that
$\theta_1,\ldots,\theta_s,1$ are linearly independent over
$\mathbb{Q}$, define a function
$f:\Reals_{\geq 0}\rightarrow \Reals$ by
\[ f(t) = g(\phi(t\theta_1),\ldots,\phi(t\theta_s)) \, .\]
Then either $f$ is identically zero, or
\begin{align*}
\liminf\limits_{n\rightarrow\infty} f(n) < 0 \, ,
\end{align*}
where $n$ ranges over the nonnegative integers.
\end{lemma}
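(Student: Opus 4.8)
\emph{Proof sketch.} The plan is to view $g$ as a continuous real-valued function on the compact abelian group $\mathbb{T}^s$ and to write $f(t)=g(\gamma(t))$, where $\gamma(t):=(\phi(t\theta_1),\ldots,\phi(t\theta_s))\in\mathbb{T}^s$. If $g$ vanishes identically on $\mathbb{T}^s$, then $f$ is identically zero and we are in the first case; so from now on I would assume $g$ is not identically zero on $\mathbb{T}^s$. The goal is then to produce arbitrarily large integers $n$ at which $f(n)$ is bounded away from $0$ on the negative side, which immediately yields $\liminf_{n\to\infty}f(n)<0$.

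The first key step is to show that $g$ attains a strictly negative value on $\mathbb{T}^s$. This is where the hypothesis that $g$ has no constant term is used: integrating against the normalized Haar measure $\mu$ on $\mathbb{T}^s$ (the $s$-fold product of uniform measure on $\mathbb{T}$), every monomial $z_1^{\pm n_{1,j}}\cdots z_s^{\pm n_{s,j}}$ with exponent vector not equal to zero integrates to $0$, so $\int_{\mathbb{T}^s}g\,d\mu=0$. Since $g$ is continuous and real-valued on the compact set $\mathbb{T}^s$ and is not identically zero, and since a continuous function that is everywhere $\ge 0$ and has zero integral against $\mu$ must vanish identically, we conclude $m:=\min_{\mathbb{T}^s}g<0$, attained at some point $a^{*}\in\mathbb{T}^s$.

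The second key step transfers this back to the integer sequence via Kronecker. By \cref{corl:kronecker} the set $\{\gamma(n):n\in\Naturals\}$ is dense in $\mathbb{T}^s$. To handle the $\liminf$ I need infinitely many suitable $n$, so I would observe that for every $N_0$ the tail $\{\gamma(n):n\ge N_0\}$ is again dense: indeed $\gamma(n+N_0)=\gamma(n)\cdot\gamma(N_0)$ coordinatewise, and $x\mapsto x\cdot\gamma(N_0)$ is a homeomorphism of $\mathbb{T}^s$, hence carries the dense set $\{\gamma(n):n\in\Naturals\}$ onto a dense subset of $\{\gamma(n):n\ge N_0\}$. Therefore $a^{*}$ is a limit point of the sequence $(\gamma(n))_n$, i.e.\ there are $n_1<n_2<\cdots$ with $\gamma(n_k)\to a^{*}$; by continuity of $g$ we get $f(n_k)=g(\gamma(n_k))\to g(a^{*})=m<0$, so $\liminf_{n\to\infty}f(n)\le m<0$, as required.

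The only step requiring genuine care is the first one — turning the purely algebraic hypothesis ``no constant term'' into the analytic fact that $g$ dips below $0$ — and the clean route is the mean-zero/Haar-integral argument, which is exactly where that hypothesis bites. Everything else is routine: there is no need to expand $f$ into a sum of exponentials, no need for time averages, and no use of the linear independence of $\theta_1,\ldots,\theta_s,1$ beyond invoking \cref{corl:kronecker}. I expect the write-up to be short.
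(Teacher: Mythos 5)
Your proof is correct and follows essentially the same route as the paper's: a mean-zero argument (via integration, which the paper performs over $[0,1]^s$ and you over $\mathbb{T}^s$ with Haar measure, the same thing after a change of variables) to show $g$ dips below zero, followed by density from \cref{corl:kronecker} and continuity. The one place you add a small explicit detail that the paper glosses over is the observation that the \emph{tail} $\{\gamma(n):n\ge N_0\}$ remains dense via the translation homeomorphism $x\mapsto x\cdot\gamma(N_0)$, which cleanly justifies the paper's bare assertion that there are ``arbitrarily large $n$'' with $f(n)$ bounded strictly below $0$.
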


\begin{proof}
  Recall that we may regard $g$ as a function from
$\mathbb{T}^s$ to $\Reals$.  Now we consider the
  function $g \circ \phi^s : \Reals^s \rightarrow \Reals$,
  \[ (x_1,\ldots,x_s) \mapsto g(\phi(x_1),\ldots,\phi(x_s)) \, . \] We
  use an averaging argument to establish that either $g\circ\phi^s$ is
  identically zero on $\Reals^s$ or there exist
  $x_1^*,\ldots,x_s^* \in [0,1]$ such that
  $g(\phi(x_1^*),\ldots,\phi(x_s^*))<0$.

  Since $\int_0^{1} \exp(2\pi i n x) dx=0$ for all non-zero
  integers $n$, it holds that
\begin{equation*}
\int_0^{1} \ldots \int_0^{1}
g(\phi(x_1),\ldots,\phi(x_s))
dx_1 \ldots dx_s =0 \, .
\end{equation*}
Suppose that
$g\circ \phi^s$ is not identically zero over
$\Reals^s$ and hence not identically zero over $[0,1]^{s}$.
Then $g\circ \phi^s$ cannot be nonnegative on $[0,1]^{s}$, since
the integral over a set of positive measure of a continuous
nonnegative function that is not identically zero must be strictly
positive.  We conclude that there must exist
$(x_1^*,\ldots,x_s^*) \in [0,1]^{s}$ such that
$g(\phi(x_1^*),\ldots,\phi(x_s^*))<0$.

By assumption, $\theta_1,\ldots,\theta_s,1$ are linearly
independent over $\mathbb{Q}$.  By~\cref{corl:kronecker} it
follows that
\[ \{ (\phi(n\theta_1),\ldots,\phi(n\theta_s)) : n \in
  \Naturals \} \] is dense in $\mathbb{T}^s$ and hence has
$(\phi(x_1^*),\ldots,\phi(x_s^*))$ as a limit point.
Since $g\circ\phi^s$ is continuous, there are
arbitrarily large $n\in\Naturals$ for which
\[
f(n) = g(\phi(n\theta_1),\ldots,\phi(n\theta_s))
    \leq \textstyle\frac{1}{2} g(\phi(x_1^*),\ldots,\phi(x_s^*)) < 0 \, , \]
which proves the result.
\end{proof}

Note that this proof could be made constructive by using an effective
version of Kronecker's Theorem, as studied in
\cite{ConstructiveKronecker1} and \cite{ConstructiveKronecker2},
although we do not make use of this fact in the present paper.

We say that a self-conjugate Laurent polynomial $g$ is \emph{simple}
if it has no constant term and each monomial mentions only a single
variable.  More precisely, $g$ is simple if it can be written in the
form 
\[ g=\sum_{j=1}^k c_j z_{i_j}^{n_j} + \overline{c_j} z_{i_j}^{-n_j} \,
  ,\] where $c_1,\ldots,c_k \in \mathbb{C}$,
$i_1,\ldots,i_k \in \{1,\ldots,s\}$, and
$n_1,\ldots,n_k \in \mathbb{Z}$.

The following consequence of~\cref{lem:first_bound} will be key to
proving decidability of the problem at hand. It is an extension of Lemma 4 from~\cite{Bra06}.

\begin{theorem}
\label{thm:liminf}
Let $g\in\Complex[z_1^{\pm 1},\ldots,z_s^{\pm 1}]$ be a simple
self-conjugate Laurent polynomial and $\theta_1,\ldots,\theta_s$
non-zero real numbers.  Then either
\begin{align*}
& g(\phi(t\theta_1),\ldots,\phi(t\theta_s)) = 0 \mbox{ for all $t\in\Reals$}
\intertext{or}
&
\liminf\limits_{n\rightarrow\infty}  g(\phi(n\theta_1),\ldots,\phi(n\theta_s)) < 0 \, ,
\end{align*}
where $n$ ranges over the nonnegative integers.
\end{theorem}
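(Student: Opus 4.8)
The plan is to bootstrap from \cref{lem:first_bound}, whose only substantive hypothesis is the $\Rationals$-linear independence of $\{\theta_1,\ldots,\theta_s,1\}$; the entire difficulty lies in disposing of the rational dependencies among the $\theta_i$, and between the $\theta_i$ and $1$, that are now allowed. First I would record what simplicity buys: writing $g=\sum_{i=1}^{s}h_i(z_i)$ with each $h_i\in\Complex[z_i^{\pm1}]$ self-conjugate and without constant term, the function $f(t):=g(\phi(t\theta_1),\ldots,\phi(t\theta_s))=\sum_i h_i(\phi(t\theta_i))$ is a real exponential polynomial in $t$ all of whose frequencies have the form $2\pi m\theta_i$ with $m\neq0$, hence are \emph{nonzero}; this is the only use of the hypothesis $\theta_i\neq0$, and it is what should prevent a constant term from re-materialising under the substitutions to come.

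Next the reduction. Let $V$ be the $\Rationals$-span of $\theta_1,\ldots,\theta_s$; choosing a $\Rationals$-basis of $V$ and rescaling it by a common denominator, I obtain a basis $\gamma_1,\ldots,\gamma_r$ of $V$ with each $\theta_i=\sum_l a_{il}\gamma_l$, $a_{il}\in\Integers$. If $1\notin V$, then $\{\gamma_1,\ldots,\gamma_r,1\}$ is $\Rationals$-independent, and the monomial substitution $z_i\mapsto\prod_l w_l^{a_{il}}$ sends $g$ to a self-conjugate $G\in\Complex[w_1^{\pm1},\ldots,w_r^{\pm1}]$ with $G(\phi(t\gamma_1),\ldots,\phi(t\gamma_r))=f(t)$, and $G$ has no constant term, since a monomial $cz_i^{m}$ maps to the constant monomial only when $ma_{il}=0$ for every $l$, i.e.\ only when $\theta_i=0$. \cref{lem:first_bound} applied to $G$ and $\gamma_1,\ldots,\gamma_r$ then yields the dichotomy directly. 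If instead $1\in V$, I would extend $\{1\}$ to a $\Rationals$-basis $\{1,\beta_2,\ldots,\beta_r\}$ of $V$, put $\delta:=1/N$ and $\gamma_l:=\beta_l/N$ for a suitable common denominator $N$ (so each $\theta_i$ is an integer combination of $\delta,\gamma_2,\ldots,\gamma_r$), and split $\Naturals$ into the progressions $n\equiv\rho\pmod N$. Along $n=Nm+\rho$ the ``$\delta$-direction'' freezes, $\phi(n\delta)=\phi(\rho/N)$, while $\phi(n\gamma_l)=\phi(\rho\gamma_l)\,\phi(m\beta_l)$, so $f(Nm+\rho)=H_\rho(\phi(m\beta_2),\ldots,\phi(m\beta_r))$, where $H_\rho$ is obtained from $G$ by fixing its $\delta$-variable to a point of $\mathbb{T}$ and rescaling the remaining variables by points of $\mathbb{T}$ --- operations that preserve self-conjugacy --- and where $\{\beta_2,\ldots,\beta_r,1\}$ is $\Rationals$-independent.

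The snag, which I expect to be the main obstacle, is that fixing the $\delta$-variable can introduce a constant term $K_\rho$ into $H_\rho$, contributed precisely by the $\theta_i$ lying in $\Rationals$; writing $H_\rho=K_\rho+H_\rho'$ with $H_\rho'$ constant-term-free and applying \cref{lem:first_bound} to $H_\rho'$ on each progression, one reduces $\liminf_{n\to\infty}f(n)$ to $\min_\rho\bigl(K_\rho+\liminf_{m\to\infty}H_\rho'(\phi(m\beta_2),\ldots,\phi(m\beta_r))\bigr)$, which must then be shown negative unless $f$ vanishes identically. In the extreme sub-case where every $\theta_i$ is rational, $f$ is just an $N$-periodic function of $n$ and the claim collapses to ``$\min_\rho K_\rho<0$ or $f\equiv0$''; extracting this --- using both the simplicity of $g$ and the nonvanishing of the $\theta_i$ --- is the heart of the matter, and the step I would work hardest on. Along the way one also checks that $H_\rho'\equiv0$ and $K_\rho=0$ for every $\rho$ is equivalent to $G=0$, hence to $f(t)=0$ for all real $t$, since $t\mapsto(\phi(t\delta),\phi(t\gamma_2),\ldots,\phi(t\gamma_r))$ has dense image in $\mathbb{T}^{r}$ by the $\Rationals$-independence of $\delta,\gamma_2,\ldots,\gamma_r$.
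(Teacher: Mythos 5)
Your organisation of the proof around whether $1$ lies in the $\Rationals$-span $V$ of $\theta_1,\ldots,\theta_s$ is exactly right, and your treatment of the case $1\notin V$ is correct and complete: the monomial substitution $z_i\mapsto\prod_l w_l^{a_{il}}$ preserves self-conjugacy, creates no constant term precisely because each $\theta_i\neq 0$, and hands the statement to \cref{lem:first_bound}. This is in substance what the paper does (it passes to a maximal subset $\theta_{i_1},\ldots,\theta_{i_k}$ with $1,\theta_{i_1},\ldots,\theta_{i_k}$ independent over $\Rationals$ and rewrites $g$ as a constant-term-free Laurent polynomial $h$ in $\phi(\theta_{i_1}t),\ldots,\phi(\theta_{i_k}t)$), though you are more careful than the paper about why no constant term appears.

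The case $1\in V$, which you candidly leave open and identify as ``the heart of the matter'', is a genuine gap --- and it cannot be closed, because the statement is false there. Take $s=1$, $g(z_1)=z_1+z_1^{-1}$ (simple, self-conjugate, no constant term) and $\theta_1=1$: then $g(\phi(t\theta_1))=2\cos(2\pi t)$ is not identically zero, yet $g(\phi(n\theta_1))=2$ for every integer $n$, so the liminf is $2$. In your notation this is the extreme sub-case where all $\theta_i$ are rational and $\min_\rho K_\rho>0$. The paper's own proof stumbles at exactly the point you isolate: after writing $N\theta_j=m+\sum_l n_l\theta_{i_l}$ it asserts that $g(\phi(N\theta_1 t),\ldots,\phi(N\theta_s t))$ equals $h(\phi(\theta_{i_1}t),\ldots,\phi(\theta_{i_k}t))$ for some $h$ ``with zero constant term'', silently discarding the factors $\phi(t)^{m}$; these equal $1$ at integer times but then contribute precisely your constant $K_\rho$. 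Two repairs are available. One is to add the hypothesis $1\notin V$ (equivalently, no rational combination of the $\theta_i$ is a non-zero rational); this costs nothing in \cref{cor:liminf}, where the theorem is applied with $\theta_i=\beta_i/2\pi$ for algebraic $\beta_i$, and a relation $\sum q_i\beta_i=2\pi q_0$ with $q_0\in\Rationals\setminus\{0\}$ would make $\pi$ algebraic. The other is to let the liminf range over real $t\to\infty$ (which is all \cref{cor:liminf} uses): the continuous orbit $\{(\phi(t\theta_1),\ldots,\phi(t\theta_s)):t\geq 0\}$ is dense in a subtorus determined by the $\Rationals$-relations among the $\theta_i$ alone, with $1$ playing no role, and the averaging argument of \cref{lem:first_bound} then applies because simplicity plus $\theta_i\neq 0$ guarantees every monomial of $g$ has zero mean on that subtorus. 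Either way, your case analysis for $1\in V$ becomes unnecessary rather than the heart of the matter.
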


\begin{proof}
  Note that if $1, \theta_{1}, \ldots, \theta_{s}$ are linearly
  independent over $\Rationals$ then the result follows from
  \cref{lem:first_bound}.  Otherwise, let
  $\lbrace \theta_{i_{1}}, \ldots, \theta_{i_{k}} \rbrace$ be a
  maximal subset of $\lbrace \theta_{1}, \ldots, \theta_{s} \rbrace$
  such that $1, \theta_{i_{1}}, \ldots, \theta_{i_{k}}$ are linearly
  independent over $\Rationals$.

Then, for some $N\in\Naturals$ and each $j$, one can write
\begin{equation*}
N\theta_{j}= \left( m  +\sum\limits_{l=1}^{k} n_{l}\theta_{i_{l}} \right) \, ,
\end{equation*}
where $m,n_{1},\ldots,n_{k}$ are integers that depend on $j$, whilst $N$ does not depend on $j$.  It follows that for all $j$ and $t \in \Reals$,
\begin{align*}
\phi(N\theta_{j} t) &= \phi(m t) \cdot \prod\limits_{l=1}^{k} \phi(n_{l} \theta_{i_{l}} t) \\
&= \phi(t)^{m} \cdot \prod\limits_{l=1}^{k} \phi(\theta_{i_{l}} t)^{n_{l}}  \, .
\end{align*}
In other words, for all $j \geq k+1$, $\phi(N \theta_{j} t)$ can be written as a product of positive and negative powers of the terms
\begin{equation*}
  \phi(t), \phi(\theta_{i_{1}} t), \ldots, \phi(\theta_{i_{k}} t) \, .
\end{equation*}
It follows that there exists a self-conjugate Laurent polynomial
$h\in\Complex [z_1^{\pm 1},\ldots,z_k^{\pm 1}]$, not necessarily
simple, but with zero constant term, such that for all
$t\in \Reals$,
\[ g(\phi(N \theta_{1} t), \ldots, \phi(N \theta_{s} t)) =
  h(\phi(\theta_{i_{1}} t),\ldots,\phi(\theta_{i_{k}} t)) \, .\]
Since $1, \theta_{i_{1}}, \ldots, \theta_{i_{k}}$ are linearly independent over $\Rationals$, the result follows by applying \cref{lem:first_bound}
to $h$.
\end{proof}

\subsection{Jordan Canonical Forms}

Let $A \in \mathbb{Q}^{d \times d}$ be a square matrix with rational
entries.  The \emph{minimal polynomial} of $A$ is the unique monic
polynomial $m(x) \in \mathbb{Q}[x]$ of least degree such that
$m(A)=0$.  By the Cayley-Hamilton Theorem the degree of $m$ is at most
the dimension $d$ of $A$. The set $\sigma(A)$ of eigenvalues is the set of
roots of $m$.  The \emph{index} of an eigenvalue $\lambda$, denoted
by $\nu(\lambda)$, is its multiplicity as a root of $m$. We
use $\nu(A)$ to denote $\max_{\lambda\in\sigma(A)} \nu(\lambda)$: the
maximum index over all eigenvalues of $A$.  Given an eigenvalue
$\lambda \in \sigma(A)$, we say that $\myvector{v} \in \Complex^d$
is a \emph{generalised eigenvector} of $A$ if
$\myvector{v}\in \ker(A-\lambda I)^{k}$, for some $k\in\Naturals$.

We denote by $\mathcal{V}_{\lambda}$ the subspace of $\Complex^d$
spanned by the set of generalised eigenvectors associated with some
eigenvalue $\lambda$ of $A$. We denote the subspace of $\Complex^d$
spanned by the set of generalised eigenvectors associated with some
real eigenvalue by $\mathcal{V}^{r}$.  We likewise denote the subspace
of $\Complex^d$ spanned by the set of generalised eigenvectors
associated to eigenvalues with non-zero imaginary part by
$\mathcal{V}^{c}$.

It is well known that each vector $\myvector{v}\in\Complex^{d}$
can be written uniquely as
$\myvector{v}=\displaystyle{
  \sum\limits_{\lambda\in\sigma(A)}\myvector{v}_{\lambda}}$,
where $\myvector{v}_{\lambda}\in\mathcal{V}_{\lambda}$.
It follows that $\myvector{v}$ can also be uniquely written as
$\myvector{v}=\myvector{v}^{r}+\myvector{v}^{c}$, where
$\myvector{v}^{r} \in\mathcal{V}^{r}$ and
$\myvector{v}^{c} \in\mathcal{V}^{c}$.

We can write any matrix $A\in \mathbb{C}^{d\times d}$ as $A=Q^{-1}JQ$
for some invertible matrix $Q$ and block diagonal Jordan matrix
$J=\diag{J_{1},\ldots,J_{N}}$, with each block $J_{i}$ having the
following form:

\begin{equation*}
\begin{pmatrix}
\lambda	&&	1		&&	0		&&	\cdots	&&	0		\\
0		&&	\lambda	&&	1		&&	\cdots	&&	0		\\
\vdots	&&	\vdots	&&	\vdots	&&	\ddots	&&	\vdots	\\
0		&&	0		&&	0		&&	\cdots	&&	1		\\
0		&&	0		&&	0		&&	\cdots	&&	\lambda	\\
\end{pmatrix}
\end{equation*}
Given a rational matrix $A$, its Jordan Normal Form $A=Q^{-1}JQ$ can be
computed in polynomial time, as shown in \cite{Cai94}.

Note that each vector $\myvector{v}$ appearing as a column of the
matrix $Q^{-1}$ is a generalised eigenvector of $A$. We also note that the
index $\nu(\lambda)$ of some eigenvalue $\lambda$ corresponds to the
dimension of the largest Jordan block associated with it.

One can obtain a closed-form expression for powers of block diagonal
Jordan matrices, and use this to get a closed-form expression for
exponential block diagonal Jordan matrices. In fact, if $J_{i}$ is a
$k\times k$ Jordan block associated with some eigenvalue $\lambda$,
then
\noindent
\begin{equation*}
J_{i}^{n}=\begin{pmatrix}
\lambda^{n}	&&	n\lambda^{n-1}	&&	{n\choose 2}\lambda^{n-1}	&&
\cdots		&&	{n\choose k-1}\lambda^{n-k+1}				\\
0			&&	\lambda^{n}		&&	n\lambda^{n-1}				&&
\cdots		&&	{n\choose k-2}\lambda^{n-k+2}				\\
\vdots	&&	\vdots	&&	\vdots	&&	\ddots	&&	\vdots			\\
0		&&	0		&&	0		&&	\cdots	&&	n\lambda^{n-1}	\\
0		&&	0		&&	0		&&	\cdots	&&	\lambda^{n}		\\
\end{pmatrix}
\end{equation*}
and
\begin{equation*}
\exp(J_{i}t)=\exp(\lambda t) \begin{pmatrix}
1		&&	t		&&	\cdots	&&	\frac{t^{k-1}}{(k-1)!}	\\
0		&&	1		&&	\cdots	&&	\frac{t^{k-2}}{(k-2)!}	\\
\vdots	&&	\vdots	&&	\ddots	&&	\vdots						\\
0		&&	0		&&	\cdots	&&	t							\\
0		&&	0		&&	\cdots	&&	1							\\
\end{pmatrix}
\end{equation*}


In the above, ${n\choose j}$ is defined to be $0$ when $n<j$.

\begin{proposition}
  Let $\myvector{v}$ lie in the generalised eigenspace
  $\mathcal{V}_{\lambda}$ for some $\lambda \in \sigma(A)$.  Then
  $\myvector{b}^{T}\exp(At)\myvector{v}$ is a linear combination
  of terms of the form $t^{n}\exp(\lambda t)$.
\label{prop:linear}
\end{proposition}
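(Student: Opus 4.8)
The plan is to pass to the Jordan canonical form of $A$ and exploit the block structure of the exponential. Write $A = Q^{-1}JQ$ with $J = \diag{J_1,\ldots,J_N}$ as in the preceding discussion. Since the defining power series of the matrix exponential commutes with conjugation, $\exp(At) = Q^{-1}\exp(Jt)\,Q$, and since $J$ is block diagonal, $\exp(Jt) = \diag{\exp(J_1 t),\ldots,\exp(J_N t)}$. Setting $\myvector{w} := Q\myvector{v}$ and $\myvector{c}^{T} := \myvector{b}^{T}Q^{-1}$, it therefore suffices to show that $\myvector{c}^{T}\exp(Jt)\myvector{w}$ is a linear combination of terms $t^{n}\exp(\lambda t)$.

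The key observation is that $\myvector{w}$ is supported only on those coordinates belonging to Jordan blocks $J_i$ whose eigenvalue equals $\lambda$. Indeed, the columns of $Q^{-1}$ indexed by a block $J_i$ associated with an eigenvalue $\mu$ are generalised eigenvectors of $A$ for $\mu$, and across all blocks for $\mu$ they span $\mathcal{V}_{\mu}$; since $Q^{-1}$ is invertible, $\mathcal{V}_{\lambda}$ is exactly the image under $Q^{-1}$ of the span of the standard basis vectors indexed by the $\lambda$-blocks. Hence $\myvector{v}\in\mathcal{V}_{\lambda}$ forces $\myvector{w} = Q\myvector{v}$ to vanish in every coordinate outside the $\lambda$-blocks, so that $\exp(Jt)\myvector{w}$ only involves the factors $\exp(J_i t)$ for $\lambda$-blocks $J_i$.

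It then remains to read off $\exp(J_i t)\myvector{w}_i$ for each Jordan block $J_i$ of size $k$ associated with $\lambda$, where $\myvector{w}_i$ is the corresponding sub-vector of $\myvector{w}$. By the displayed closed form, $\exp(J_i t)$ is $\exp(\lambda t)$ times an upper-triangular matrix whose $(p,q)$ entry equals $t^{q-p}/(q-p)!$ for $p \le q$; consequently each coordinate of $\exp(J_i t)\myvector{w}_i$ is $\exp(\lambda t)$ times a polynomial in $t$ of degree at most $k-1$. Assembling these over all $\lambda$-blocks and taking the inner product with the corresponding entries of $\myvector{c}$ yields a finite linear combination of terms $t^{n}\exp(\lambda t)$, as claimed.

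I do not anticipate a serious obstacle here; the one point that requires a little care is the bookkeeping in the second step that identifies $\mathcal{V}_{\lambda}$ with the span of the appropriate standard basis vectors under the change of basis $Q$. This is precisely what guarantees that the answer involves the single exponential $\exp(\lambda t)$ and no contribution from the other eigenvalues of $A$.
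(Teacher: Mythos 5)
Your proof follows exactly the same route as the paper: pass to the Jordan canonical form $A=Q^{-1}JQ$, observe that $Q\myvector{v}$ vanishes outside the coordinates belonging to the $\lambda$-blocks, and read off the closed form of $\exp(J_i t)$. The only difference is that you spell out the justification for the support of $Q\myvector{v}$ and the final assembly, which the paper leaves implicit; the argument is correct.
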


\begin{proof}
  Note that, if $A=Q^{-1}JQ$ and $J=diag(J_{1},\ldots,J_{N})$ is a
  block diagonal Jordan matrix, then
  \[\exp(At)=Q^{-1}\exp(Jt)Q\]
  and
  \[\exp(Jt)=\diag{\exp(J_{1}t),\ldots,\exp(J_{N}t)} \, .\]
The result follows by observing that $Q\myvector{v}$ is zero in every component
other than those pertaining the block corresponding to the eigenspace
$\mathcal{V}_{\lambda}$.
\end{proof}

In order to compare the asymptotic growth of expressions of the form
$t^{n}\exp(\lambda t)$, for $\lambda\in\Reals$ and
$n\in\Naturals_0$, we define $\prec$ to be the lexicographic order on
$\Reals\times\Naturals_{0}$, that is,
\begin{equation*}
(\eta,j)\prec (\rho,m) \quad \mbox{iff} \quad
\eta<\rho \mbox{ or } (\eta = \rho \mbox{ and } j< m) \, .
\end{equation*}
Clearly $\exp(\eta t)t^{j}=o(\exp(\rho t)t^{m})$ as $t\rightarrow \infty$ if and only if $(\eta,j)\prec (\rho,m)$.

\begin{definition}
If $\myvector{b}^{T}\exp(At)\myvector{v}$ is not identically zero,
the maximal $(\rho,m)\in\Reals\times\Naturals_{0}$ (with respect
to $\prec$) for which there is a term $t^{m}\exp (\lambda t)$ with
$\Re(\lambda)=\rho$ in the closed-form expression for
$\myvector{b}^{T}\exp(At)\myvector{v}$ is called \emph{dominant} for
$\myvector{b}^{T}\exp(At)\myvector{v}$.
\end{definition}

Before we can proceed, we shall need the following auxiliary result:

\begin{proposition}
\label{conj-relation}
Suppose that $\myvector{v}\in\Reals^{d}$ and that \[\myvector{v}=\sum\limits_{\lambda\in\sigma(A)} \myvector{v}_{\lambda} \, , \]
where $\myvector{v}_{\lambda} \in\mathcal{V}_{\lambda}$. Then $\myvector{v}_{\overline{\lambda}}$ and $\myvector{v}_{\lambda}$ are component-wise complex conjugates.
\end{proposition}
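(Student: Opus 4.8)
The plan is to exploit the fact that $A$ has real (indeed rational) entries, so that entrywise complex conjugation commutes with multiplication by $A$, and then invoke the uniqueness of the generalised-eigenspace decomposition already recalled in the text.

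First I would establish that $\overline{\mathcal{V}_{\lambda}} = \mathcal{V}_{\overline{\lambda}}$ for every $\lambda \in \sigma(A)$. If $\myvector{w} \in \ker(A - \lambda I)^{k}$ for some $k \in \Naturals$, then applying entrywise conjugation and using $\overline{A} = A$ gives $(A - \overline{\lambda} I)^{k}\,\overline{\myvector{w}} = \overline{(A - \lambda I)^{k}\myvector{w}} = \myvector{0}$, so $\overline{\myvector{w}} \in \ker(A - \overline{\lambda} I)^{k}$. Hence conjugation maps generalised eigenvectors for $\lambda$ to generalised eigenvectors for $\overline{\lambda}$; since it is a (conjugate-linear) bijection, it carries the span $\mathcal{V}_{\lambda}$ onto $\mathcal{V}_{\overline{\lambda}}$. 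Note also that $\sigma(A)$ is closed under conjugation, since the minimal polynomial has real coefficients, so this statement is symmetric in $\lambda$ and $\overline{\lambda}$.

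Next I would conjugate the given decomposition. Since $\myvector{v} \in \Reals^{d}$ we have $\myvector{v} = \overline{\myvector{v}} = \sum_{\lambda \in \sigma(A)} \overline{\myvector{v}_{\lambda}}$, and by the previous step $\overline{\myvector{v}_{\lambda}} \in \mathcal{V}_{\overline{\lambda}}$. Re-indexing the sum (replacing $\lambda$ by $\overline{\lambda}$, a permutation of $\sigma(A)$) yields $\myvector{v} = \sum_{\mu \in \sigma(A)} \overline{\myvector{v}_{\overline{\mu}}}$ with $\overline{\myvector{v}_{\overline{\mu}}} \in \mathcal{V}_{\mu}$. By the uniqueness of the decomposition of a vector as a sum of components in the generalised eigenspaces, comparing this with $\myvector{v} = \sum_{\mu} \myvector{v}_{\mu}$ forces $\myvector{v}_{\mu} = \overline{\myvector{v}_{\overline{\mu}}}$ for every $\mu$, i.e.\ $\myvector{v}_{\lambda}$ and $\myvector{v}_{\overline{\lambda}}$ are component-wise complex conjugates.

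I do not anticipate a genuine obstacle here; the only point requiring a little care is the bookkeeping in the re-indexing step and making explicit that $\sigma(A)$ is stable under conjugation, both of which are immediate from $A$ being a real matrix. The whole argument is essentially the observation that the canonical projection onto $\mathcal{V}_{\lambda}$ and onto $\mathcal{V}_{\overline{\lambda}}$ are intertwined by conjugation.
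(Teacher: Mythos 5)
Your proposal is correct and follows essentially the same route as the paper: both hinge on the observation that conjugation carries $\ker(A-\lambda I)^k$ to $\ker(A-\overline{\lambda}I)^k$ because $A$ is real, and then invoke the uniqueness of the generalised-eigenspace decomposition (the paper phrases this as each summand of $\myvector{v}-\overline{\myvector{v}}=\myvector{0}$ lying in a single $\mathcal{V}_\lambda$, which is the same direct-sum argument as your re-indexing step).
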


\begin{proof}
We start by observing that
\begin{align}
\myvector{0}=\myvector{v}-\overline{\myvector{v}}=\sum\limits_{\lambda\in \sigma(A)}(\myvector{v}_{\lambda}-\overline{ \myvector{v}_{ \overline{\lambda}}}) \, .
\label{eq:unique}
\end{align}
But if $\myvector{v}_{\lambda}\in \ker(A-\lambda I)^{k}$ then
$\overline{\myvector{v}_{\lambda}} \in \ker(A-\overline{\lambda}
I)^{k}$, and hence
$\overline{ \myvector{v}_{ \overline{\lambda}}} \in
\mathcal{V}_{\lambda}$.  Thus each summand
$\myvector{v}_{\lambda}-\overline{ \myvector{v}_{
    \overline{\lambda}}}$ in (\ref{eq:unique}) lies in
$\mathcal{V}_{\lambda}$.  Since $\mathbb{C}^d$ is a direct sum of the
generalised eigenspaces of $A$, we must have
$\myvector{v}_{\lambda}=\overline{ \myvector{v}_{
    \overline{\lambda}}}$ for all $\lambda \in \sigma(A)$.
\end{proof}

We now derive a corollary of~\cref{thm:liminf}.
\begin{corollary}
\label{cor:liminf}
Consider a function of the form
$h(t)=\myvector{b}^{T}\exp(At) \myvector{v}^{c}$, where
$\myvector{v}^{c}\in\mathcal{V}^{c}$, with
$(\rho,m)\in\Reals\times \Naturals_{0}$ dominant.
If $h(t)\not\equiv 0$, then we have
\begin{align*}
-\infty<\liminf\limits_{t\rightarrow\infty} \frac{h(t)}{\exp(\rho
  t)t^{m}}<0 \, .
\end{align*}
\end{corollary}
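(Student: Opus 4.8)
The idea is to expand $h(t) = \myvector{b}^T \exp(At)\myvector{v}^c$ into its closed form, normalise by the dominant term $\exp(\rho t)t^m$, and show the resulting function is (up to an exponentially small error) a simple self-conjugate Laurent polynomial evaluated along a Kronecker orbit, so that \cref{thm:liminf} applies. First I would write $\myvector{v}^c = \sum_\lambda \myvector{v}_\lambda$, the sum ranging over eigenvalues $\lambda = \eta + i\gamma$ with $\gamma \neq 0$; by \cref{prop:linear} each $\myvector{b}^T\exp(At)\myvector{v}_\lambda$ is a linear combination of terms $t^n \exp(\lambda t)$, and by \cref{conj-relation} the contributions of $\lambda$ and $\overline\lambda$ are complex conjugates, so $h(t)$ is a real-valued sum of terms $c_{\lambda,n} t^n \exp(\lambda t) + \overline{c_{\lambda,n}} t^n \exp(\overline\lambda t)$. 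Dividing by $\exp(\rho t)t^m$, every term with real part $\eta < \rho$, or with $\eta = \rho$ but $n < m$, contributes $o(1)$ as $t\to\infty$ (by the remark just before the definition of dominant). What survives in the limit is exactly the sum over the eigenvalues $\lambda = \rho + i\gamma$ whose real part equals $\rho$ and whose associated term has the top power $t^m$; writing $\theta_\gamma$ for the relevant frequencies (i.e. $\gamma/(2\pi)$, so that $\exp(i\gamma t) = \phi(t\theta_\gamma)$), this surviving part is
\begin{equation*}
g(\phi(t\theta_1), \ldots, \phi(t\theta_s)) = \sum_j \left( c_j\, \phi(t\theta_j) + \overline{c_j}\, \phi(-t\theta_j) \right),
\end{equation*}
a simple self-conjugate Laurent polynomial (each monomial mentions a single variable $z_j$ with exponent $\pm 1$, and there is no constant term since every $\gamma \neq 0$). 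Thus $h(t)/(\exp(\rho t)t^m) = g(\phi(t\theta_1),\ldots,\phi(t\theta_s)) + \varepsilon(t)$ with $\varepsilon(t) \to 0$.

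Next I would apply \cref{thm:liminf} to $g$ with the frequencies $\theta_1, \ldots, \theta_s$ (which are non-zero because each $\gamma \neq 0$). There are two cases. If $g(\phi(t\theta_1),\ldots,\phi(t\theta_s)) = 0$ for all $t$, then I claim $h(t) \equiv 0$: indeed the leading (dominant) block of terms in the closed form of $h$ vanishes identically, but by definition of "dominant" a nonzero $h$ has a genuine leading block, so the only way this block can be identically zero is if $h$ itself is identically zero — contradicting the hypothesis $h \not\equiv 0$. (Here one uses that terms $t^n \exp(\lambda t)$ for distinct pairs $(\lambda, n)$ are linearly independent, so no cancellation across different $(\rho + i\gamma, m)$ pairs can fake a zero leading block.) Hence in the relevant case $\liminf_{n\to\infty} g(\phi(n\theta_1),\ldots,\phi(n\theta_s)) < 0$, where $n$ ranges over nonnegative integers. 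Since $\varepsilon(n) \to 0$, this gives $\liminf_{n\to\infty} h(n)/(\exp(\rho n)n^m) < 0$, and a fortiori $\liminf_{t\to\infty} h(t)/(\exp(\rho t)t^m) < 0$ (the continuous $\liminf$ over reals is at most the $\liminf$ along integers). For the lower bound $-\infty < \liminf$: the normalised function equals a bounded function $g \circ \phi^s$ (bounded because $\mathbb{T}^s$ is compact and $g$ continuous) plus $\varepsilon(t) \to 0$, hence is bounded below.

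The main obstacle I anticipate is the bookkeeping in the first paragraph: correctly identifying which terms in the closed form of $h$ are "dominant," and justifying that dividing by $\exp(\rho t) t^m$ kills precisely the non-dominant ones while leaving a clean simple Laurent polynomial with no constant term. One subtlety is that a single eigenvalue $\lambda = \rho + i\gamma$ with a Jordan block may contribute several powers $t^n$; only the maximal power $t^m$ survives normalisation, and one must check that the coefficient of $t^m \exp(\lambda t)$ is genuinely what appears (this is where \cref{prop:linear} and the explicit Jordan-block exponential formula are used). A second subtlety is the case analysis showing $g \equiv 0$ forces $h \equiv 0$; this rests on linear independence of the functions $t \mapsto t^n \exp(\lambda t)$, which is standard but should be invoked explicitly. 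Once these points are nailed down, the rest is a direct appeal to \cref{thm:liminf} together with the vanishing of the error term.
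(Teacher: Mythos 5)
Your proof is correct and follows essentially the same route as the paper: decompose $h(t)$ into blocks $t^{j}\exp(\eta t)\,g_{(\eta,j)}(\cdot)$ indexed by the real parts $\eta$ and powers $j$, observe that dividing by the dominant $\exp(\rho t)t^{m}$ kills the lower-order blocks and leaves a simple self-conjugate Laurent polynomial with no constant term (all exponents $\pm 1$ since the imaginary parts are nonzero) plus an $o(1)$ error, and invoke \cref{thm:liminf}; you also make two points explicit that the paper leaves tacit, namely the finiteness of the liminf (boundedness of $g$ on the compact torus) and the exclusion of the ``$g\equiv 0$ on the orbit'' branch of \cref{thm:liminf}. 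The one imprecision is your parenthetical claim that $g\equiv 0$ on the orbit forces $h\equiv 0$: what actually follows (via the linear independence of the functions $t\mapsto\exp(\pm i\gamma t)$, as you note) is that every coefficient in the $(\rho,m)$-block of $h$ vanishes, which directly contradicts $(\rho,m)$ being dominant for the nonzero $h$ — the contradiction is obtained, but against dominance rather than against $h\not\equiv 0$.
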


\begin{proof}
  Let
  \[\Re(\sigma(A))=\lbrace \eta\in\Reals:
  \eta+i\theta\in\sigma(A),\mbox{ for some }\theta\in\Reals
  \rbrace \,. \]
  For each $\eta\in\Re(\sigma(A))$ define
  $\myvector{\theta}_{\eta}=\lbrace \theta\in\Reals_{>0}:
  \eta+i\theta \in\sigma(A) \rbrace$. By abuse of notation, we also
  use $\myvector{\theta}_{\eta}$ to refer to the vector whose
  coordinates are exactly the members of this set, ordered in an
  increasing way. We note that, due to~\cref{conj-relation}
  and~\cref{prop:linear}, the
  following holds:

\begin{align*}
\myvector{b}^{T}\exp(At)\myvector{v}^{c} &= \myvector{b}^{T} \exp(At) \sum\limits_{\eta\in\Re(\sigma(A))} \sum\limits_{\theta\in\myvector{\theta}_{\eta}} \myvector{v}_{\eta+i\theta}+\myvector{v}_{\eta-i\theta} \\
&= \sum\limits_{\eta\in\Re(\sigma(A))} \sum\limits_{\theta\in\myvector{\theta}_{\eta}} \myvector{b}^{T} \exp(At) \myvector{v}_{\eta+i\theta} \\
& \qquad \qquad \qquad \qquad \qquad + \overline{\myvector{b}^{T} \exp(At) \myvector{v}_{\eta+i\theta}} \\
& = \sum\limits_{\eta\in\Re(\sigma(A))} \sum\limits_{j=0}^{\nu(A)-1} t^{j}\exp(\eta t)  g_{(\eta,j)}( \exp(i\myvector{\theta}_{\eta}t) )
\end{align*}
for some simple self-conjugate Laurent polynomials
$g_{(\eta,j)}$.
Note that
\begin{equation*}
(\rho,m)=\max\limits_{\prec} \lbrace (\eta,j)\in\Reals\times
\Naturals_{0}: g_{(\eta,j)}(\exp(i\myvector{\theta}_{\eta} t)) \not
\equiv 0 \rbrace \, .
\end{equation*}

The result then follows from~\cref{thm:liminf} and the fact
that
\begin{align*}
\liminf\limits_{t\rightarrow\infty} \frac{h(t)}{\exp(\rho t)t^{m}}=
  \liminf\limits_{t\rightarrow\infty}
  g_{(\rho,m)}(\exp(i\myvector{\theta}_{\rho} t)) \, .
\end{align*}
\end{proof}

\subsection{Computation with Algebraic Numbers}

In this section, we briefly explain how one can represent and manipulate algebraic numbers efficiently.

Any given algebraic number $\alpha$ can be represented as a tuple
$(p,a,\varepsilon)$, where $p\in\mathbb{Q}[x]$ is its minimal
polynomial, $a=a_1+a_2i$, with $a_1,a_2\in\mathbb{Q}$, is an
approximation of $\alpha$, and $\varepsilon \in \mathbb{Q}$ is
sufficiently small that $\alpha$ is the unique root of $p$ within
distance $\varepsilon$ of $a$.  This is referred to as the standard or
canonical representation of an algebraic number.

Let $f\in\mathbb{Z}[x]$ be a polynomial.  The following
root-separation bound, due to Mignotte \cite{Mig82}, can be used to
give a value of $\varepsilon$ such that any disk of radius
$\varepsilon$ in the complex plane contains at most one root of $f$.
\begin{proposition}
Let $f\in\mathbb{Z}[x]$. If $\alpha_{1}$ and $\alpha_{2}$ are distinct roots of $f$, then
\begin{align*}
\lvert \alpha_{1}-\alpha_{2} \rvert > \frac{\sqrt{6}}{d^{(d+1)/2}H^{d-1}}
\end{align*}
where $d$ and $H$ are respectively the degree and height (maximum
absolute value of the coefficients) of $f$.
\end{proposition}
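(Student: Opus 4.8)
The plan is to follow the classical route of Mahler and Mignotte: bound the full product of root differences \emph{from below} using integrality of the discriminant, bound the product of all root differences \emph{other than} $\alpha_1-\alpha_2$ \emph{from above} using Hadamard's inequality on a Vandermonde-type determinant, and then divide. First I would reduce to the case that $f$ is squarefree: $\alpha_1$ and $\alpha_2$ remain distinct roots of the squarefree part $f/\gcd(f,f')$, which has degree at most $d$ and height controlled by that of $f$ through Mignotte's bound on the coefficients of a divisor, so it suffices to treat squarefree $f$ — the only case arising in the applications — at the cost of an adjustment of the constant. So write $f(x)=a\prod_{i=1}^{d}(x-\alpha_i)$ with $a\in\mathbb{Z}$ and $1\le|a|\le H$, and relabel the two prescribed roots so that $|\alpha_1|\le|\alpha_2|$; this choice will matter in the estimate.

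For the lower bound, note that $\operatorname{disc}(f)=a^{2d-2}\prod_{i<j}(\alpha_i-\alpha_j)^2$ is a nonzero rational integer — integral because it is a polynomial with integer coefficients in the coefficients of $f$, and nonzero because $f$ is squarefree — so $|\operatorname{disc}(f)|\ge 1$ and hence
\[ \prod_{1\le i<j\le d}|\alpha_i-\alpha_j|\;\ge\;|a|^{-(d-1)}. \]
For the upper bound, write $\prod_{i<j}|\alpha_i-\alpha_j|=|\det V|$ with $V=(\alpha_i^{\,j-1})_{i,j}$ the Vandermonde matrix, and replace the $\alpha_2$-row of $V$ by the divided-difference row whose entries are $\frac{\alpha_2^{\,j-1}-\alpha_1^{\,j-1}}{\alpha_2-\alpha_1}=\sum_{m=0}^{j-2}\alpha_1^{\,m}\alpha_2^{\,j-2-m}$; the resulting matrix $V'$ satisfies $|\det V'|=\prod_{i<j}|\alpha_i-\alpha_j|/|\alpha_1-\alpha_2|$. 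Estimating each row of $V'$ in Euclidean norm — by $\sqrt d\,\max(1,|\alpha_i|)^{d-1}$ for the $d-1$ unaltered rows and by $\bigl(\sum_{k=1}^{d-1}k^2\bigr)^{1/2}\max(1,|\alpha_2|)^{d-2}$ for the divided-difference row, where the choice $|\alpha_1|\le|\alpha_2|$ is exactly what keeps one root's modulus from being overcounted — and grouping the root moduli into the Mahler measure $M(f)=|a|\prod_i\max(1,|\alpha_i|)$, Hadamard's inequality gives
\[ \frac{\prod_{i<j}|\alpha_i-\alpha_j|}{|\alpha_1-\alpha_2|}\;\le\;d^{(d-1)/2}\Bigl(\tfrac{(d-1)d(2d-1)}{6}\Bigr)^{1/2}\!\bigl(M(f)/|a|\bigr)^{d-1}. \]

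Dividing the first inequality by the second, the factors $|a|^{\pm(d-1)}$ cancel, leaving $|\alpha_1-\alpha_2|\ge \sqrt{3}\,d^{-(d+2)/2}M(f)^{-(d-1)}$ (Mahler's separation bound for squarefree $f$); replacing $M(f)$ by the height via Landau's inequality $M(f)\le\|f\|_2\le\sqrt{d+1}\,H$ then yields a bound of the stated form, with some explicit constant and some explicit power of $d$. The conceptual skeleton — a nonzero-integer discriminant together with Hadamard's inequality — is entirely soft; the part I expect to be delicate, and where Mignotte's care enters, is the constant-chasing: arranging the per-row norm bounds (the relabelling of $\alpha_1,\alpha_2$, and possibly a sharper estimate of $|\det V'|$ or a direct treatment of $\|f\|_\infty$ in place of $M(f)$) so that the accumulated powers of $d$ collapse to exactly $d^{(d+1)/2}$ and the numerical constant comes out as $\sqrt6$ rather than something weaker, and then checking that the reduction to the squarefree case does not spoil this.
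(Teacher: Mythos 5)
The paper does not prove this proposition: it cites Mignotte's 1982 paper directly, with no internal argument, so there is no ``paper route'' to compare against. Your sketch is the classical Mahler strategy---lower-bound $\prod_{i<j}|\alpha_i-\alpha_j|$ via the nonzero integer discriminant, upper-bound $\prod_{i<j}|\alpha_i-\alpha_j|\,/\,|\alpha_1-\alpha_2|$ via Hadamard on a divided-difference Vandermonde---and that is indeed the conceptual ancestor of Mignotte's theorem. But the argument as written does not reach the stated bound, and you concede as much. Tracing your estimates through gives Mahler's $\operatorname{sep}(f) > \sqrt{3}\,d^{-(d+2)/2}\,M(f)^{1-d}$, and converting to the height via Landau's $M(f)\leq\|f\|_2\leq\sqrt{d+1}\,H$ yields
\[
|\alpha_1-\alpha_2| \;>\; \frac{\sqrt{3}}{d^{(d+2)/2}\,(d+1)^{(d-1)/2}\,H^{d-1}},
\]
which trails the claimed $\sqrt{6}\,d^{-(d+1)/2}\,H^{1-d}$ by a factor that grows roughly like $d^{d/2}$. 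Closing this gap is precisely the content of Mignotte's refinement, and it is not recoverable by tweaking the per-row Hadamard bounds you wrote down: the detour through the Mahler measure already costs the $(d+1)^{(d-1)/2}$ factor, so a bound directly in terms of $\|f\|_2$ or $H$ is needed, and the $d^{(d+2)/2}$ versus $d^{(d+1)/2}$ discrepancy requires a genuinely sharper determinant estimate rather than constant-chasing within the present scheme.

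The squarefree reduction is also less innocuous than your sketch suggests. The proposition is stated for arbitrary $f\in\mathbb{Z}[x]$ with $d$ and $H$ being the degree and height of $f$ \emph{itself}, but the discriminant argument needs the squarefree part $f^{*}=f/\gcd(f,f')$, whose height can exceed $H$ (Mignotte's divisor bound controls it, but does not keep it below $H$). Proving the separation bound for $(d^{*},H^{*})$ of $f^{*}$ does not automatically give the bound for $(d,H)$ of $f$: the degree drops, which helps, while the height may grow, which hurts the $H^{d-1}$ factor, and you would need to verify that the net effect is favourable. In the paper's application $f$ is a minimal polynomial and hence squarefree, so the issue is harmless there, but it is a real step left open in the argument for the proposition as literally stated.
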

It follows that in the canonical representation $(p,a,\varepsilon)$ of
an algebraic number $\alpha$, where $p$ has degree $d$ and height $H$,
we may choose $a_1,a_2,\epsilon$ to have bit length polynomial in $d$
and $\log H$.

Given canonical representations of two algebraic numbers $\alpha$ and
$\beta$, one can compute canonical representations of $\alpha+\beta$,
$\alpha\beta$, and $\alpha/\beta$, all in polynomial time.  More
specifically, one can:
\begin{itemize}
\item factor an arbitrary polynomial with rational coefficients as a
  product of irreducible polynomials in polynomial time using the LLL
  algorithm, described in \cite{LenstraLenstraLovasz1982};
\item compute an approximation of an arbitrary algebraic number
  accurate up to polynomially many bits in polynomial time, due to the
  work in~\cite{Pan97};
\item use the sub-resultant algorithm (see Algorithm 3.3.7 in
  \cite{Coh93}) and the two aforementioned procedures to compute
  canonical representations of sums, differences, multiplications, and
  quotient of two canonically represented algebraic numbers.
\end{itemize}

\section{Existential First-Order Theory of the Reals}

Let $\myvector{x}=(x_1,\ldots,x_m)$ be a list of $m$ real-valued
variables, and let $\sigma(\myvector{x})$ be a Boolean combination of
atomic predicates of the form $g(\myvector{x})\sim 0$, where each
$g(\myvector{x})$ is a polynomial with integer coefficients in the
variables $\myvector{x}$, and $\sim$ is either $>$ or $=$. Tarski has
famously shown that we can decide the truth over the field $\Reals$ of
sentences of the form
$\phi=Q_1 x_1 \cdots Q_m x_m \sigma(\myvector{x})$, where $Q_i$ is
either $\exists$ or $\forall$. He did so by showing that this theory
admits quantifier elimination (Tarski-Seidenberg Theorem
\cite{Tar51}). The set of all true sentences of such form is called
the first-order theory of the reals, and the set of all true sentences
where only existential quantification is allowed is called the
existential first-order theory of the reals. The complexity class
$\exists\Reals$ is defined as the set of problems having a
polynomial-time many-one reduction to the existential theory of the
reals. It was shown in \cite{Canny88} that
$\exists\Reals\subseteq \mathit{PSPACE}$.

We also remark that our standard representation of algebraic numbers
allows us to write them explicitly in the first-order theory of the
reals, that is, given $\alpha\in\mathbb{A}$, there exists a sentence
$\sigma(x)$ such that $\sigma(x)$ is true if and only if
$x=\alpha$. Thus, we allow their use when writing sentences in the
first-order theory of the reals, for simplicity.

The decision version of linear programming with canonically-defined algebraic coefficients is in $\exists\Reals$, as the emptiness of a convex polytope can easily be described by a sentence of the form $\exists x_1 \cdots \exists x_n \sigma(\myvector{x})$.

Finally, we note that even though the decision version of linear
programming with rational coefficients is in $\mathit{P}$, allowing
algebraic coefficients makes things more complicated. While it has
been shown in \cite{AdlerB94} that this is solvable in time polynomial
in the size of the problem instance and on the degree of the smallest
number field containing all algebraic numbers in each instance, it
turns out that in the problem at hand the degree of that extension can
be exponential in the size of the input. In other words, the splitting
field of the characteristic polynomial of a matrix can have a degree
which is exponential in the degree of the characteristic polynomial.

\section{The Polytope Escape Problem}

The Polytope Escape Problem for continuous linear dynamical systems
consists of deciding, given an affine function
$f:\Reals^{d}\rightarrow \Reals^{d}$ and a convex polytope
$\mathcal{P}\subseteq\Reals^{d}$, whether there exists an initial point
$\myvector{x}_{0} \in \mathcal{P}$ for which the trajectory of the unique
solution to the differential equation
$\dot{\myvector{x}}(t)=f(\myvector{x}(t)),\myvector{x}(0)=\myvector{x}_{0},
t\geq 0$,
is entirely contained in $\mathcal{P}$.  A starting point
$\myvector{x}_{0}\in\mathcal{P}$ is said to be \emph{trapped} if
the trajectory of the corresponding solution is contained in $\mathcal{P}$,
and \emph{eventually trapped} if the trajectory of the corresponding
solution contains a trapped point. Therefore, the Polytope Escape
Problem amounts to deciding whether a trapped point exists, which in
turn is equivalent to deciding whether an eventually trapped point exists.

The goal of this section is to prove the following result:

\begin{theorem}
  The Polytope Escape Problem is polynomial-time reducible to the
  decision version of linear programming with algebraic coefficients.
\end{theorem}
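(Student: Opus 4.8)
The plan is to convert the ``for all $t \ge 0$'' constraint into an asymptotic one, read off a closed form for the trajectory, and show that confinement to $\mathcal P$ from some point onwards is dictated by the leading terms of that closed form, which in turn amount to linear constraints on the starting point. First I would homogenise the dynamics: with $f(\myvector x) = A\myvector x + \myvector b$ for rational $A$ and $\myvector b$, appending a coordinate pinned to $1$ turns the system into a homogeneous linear one $\dot{\myvector w} = \widetilde A \myvector w$ on $\Reals^{d+1}$, where $\widetilde A$ is obtained from $A$ by adjoining $\myvector b$ as a last column and a zero row; a starting point $\myvector x_0 \in \mathcal P$ corresponds to $\myvector w_0 = (\myvector x_0, 1)$, and the trajectory from $\myvector x_0$ remains in $\mathcal P = \{\myvector x : \myvector c_i^{T}\myvector x \le d_i,\ 1 \le i \le m\}$ exactly when $\widetilde{\myvector c}_i^{T}\exp(\widetilde A t)\myvector w_0 \le d_i$ for every $i$ and every $t \ge 0$, where $\widetilde{\myvector c}_i = (\myvector c_i, 0)$. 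Using~\cite{Cai94} I compute a Jordan form $\widetilde A = Q^{-1}JQ$ in polynomial time; this exhibits the eigenvalues of $\widetilde A$ as algebraic numbers, and by the facts recalled about computing with algebraic numbers their real parts, their imaginary parts, and the linear forms extracting the component of $\myvector w_0$ in each generalised eigenspace are all obtained in polynomial time. Note that a trapped trajectory need not be bounded (the polytope may be unbounded), so growing terms in the closed form genuinely have to be accounted for.

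Next I would use the reformulation already recorded in the text: a trapped point exists iff some trajectory is \emph{eventually} inside $\mathcal P$, since if $\widetilde{\myvector c}_i^{T}\exp(\widetilde A t)\myvector w_0 \le d_i$ holds for all $i$ and all $t \ge \tau$ then $\exp(\widetilde A \tau)\myvector w_0$ is trapped. Thus it suffices to decide whether there is a $\myvector w_0$, with last coordinate $1$, such that for each facet $i$ the function $p_i(t) := \widetilde{\myvector c}_i^{T}\exp(\widetilde A t)\myvector w_0 - d_i$ is $\le 0$ for all sufficiently large $t$; for each fixed $t$, $p_i(t)$ is linear in $\myvector w_0$. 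Grouping the closed form of $p_i$ by the real part $\eta$ and the power $j$ of $t$, exactly as in the proof of~\cref{cor:liminf} and using~\cref{prop:linear} and~\cref{conj-relation}, I would write
\[ p_i(t) = \sum_{(\eta,j)} t^{j}\exp(\eta t)\bigl( \alpha_{\eta,j}^{i}(\myvector w_0) + g_{\eta,j}^{i,\myvector w_0}(\exp(i\myvector\theta_\eta t)) \bigr), \]
where each $\alpha_{\eta,j}^{i}$ is a real linear form in $\myvector w_0$, each $g_{\eta,j}^{i,\myvector w_0}$ is a simple self-conjugate Laurent polynomial with no constant term whose coefficients are linear in $\myvector w_0$, and $\myvector\theta_\eta$ lists the positive imaginary parts of the eigenvalues of $\widetilde A$ with real part $\eta$.

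The core of the argument is a lemma describing ``$p_i(t) \le 0$ for all large $t$'' in terms of the $\prec$-leading pairs. Ordering the pairs $(\eta,j)$ by $\prec$ and letting $(\rho,m)$ be the largest one at which $p_i$ has a summand that is not identically zero along $t \mapsto \exp(i\myvector\theta_\eta t)$, the sign of $p_i$ far out is governed by $\alpha_{\rho,m}^{i}(\myvector w_0) + g_{\rho,m}^{i,\myvector w_0}(\exp(i\myvector\theta_\rho t))$, and the point of~\cref{thm:liminf} and~\cref{cor:liminf} is that this cannot keep $p_i$ below the bound unless its oscillatory part is inert: a nonzero simple self-conjugate Laurent polynomial takes strictly positive values arbitrarily far along the orbit — apply~\cref{thm:liminf} to its negation, which is again simple and self-conjugate — so, multiplied by the growing factor $t^{m}\exp(\rho t)$, or already at the level $(0,0)$, it would drive $p_i$ past $d_i$ at a sequence of times tending to infinity. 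Turning this into linear constraints on $\myvector w_0$ is where I expect the real difficulty to lie: one has to descend through the degenerate cases in which the supremum of the leading Laurent polynomial along the orbit exactly cancels $\alpha_{\rho,m}^{i}$ (a recursion of length at most $d^{2}$, the number of pairs), to cope with resonances — $\mathbb Q$-linear dependencies among the imaginary parts of the eigenvalues, which shrink the orbit closure and make that supremum a priori non-linear in $\myvector w_0$ — by the same passage to a $\mathbb Q$-independent basis of frequencies that appears in the proof of~\cref{thm:liminf}, and to observe that ``$g_{\eta,j}^{i,\myvector w_0}$ is identically zero along the orbit'' is itself a system of linear equations in $\myvector w_0$. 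The target statement is that, after fixing for each facet which pair $(\eta,j)$ is the leading one, the requirement ``$p_i(t) \le 0$ eventually'' becomes a conjunction of linear equalities together with one strict linear inequality in $\myvector w_0$, with algebraic coefficients.

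Finally, combining these per-facet conditions with the affine constraint ``last coordinate $= 1$'' and organising the resulting case split (with some care to keep its size polynomial), one obtains a polynomial-size family of linear programs with algebraic coefficients — coefficients already computed from the Jordan form — one of which is feasible precisely when a trapped point exists. Since every step runs in polynomial time, this gives the desired polynomial-time reduction to the decision version of linear programming with algebraic coefficients.
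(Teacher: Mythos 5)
Your setup (homogenisation, Jordan form computed via \cite{Cai94}, grouping the closed form by pairs $(\eta,j)$, and replacing ``trapped'' by ``eventually trapped'') matches the paper's, but the core of your argument has a genuine gap, and it is precisely the step that the paper's key idea is designed to avoid. You attempt to characterise, for an \emph{arbitrary} starting point $\myvector{w}_0$, when the trajectory is eventually confined, and to express this as polynomially many systems of linear constraints on $\myvector{w}_0$. At the leading pair $(\rho,m)$ the relevant condition involves $\alpha_{\rho,m}^{i}(\myvector{w}_0)+\sup g_{\rho,m}^{i,\myvector{w}_0}$, the supremum being over the closure of the orbit $t\mapsto\exp(i\myvector{\theta}_\rho t)$ in the torus. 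Your claim that confinement forces the oscillatory part to be ``inert'' is false: if $\alpha_{\rho,m}^{i}=-10$ and $g_{\rho,m}^{i,\myvector{w}_0}$ takes values in $[-1,1]$, the positive excursions guaranteed by \cref{thm:liminf} applied to $-g$ are absorbed and $p_i$ stays negative. So the true condition is a sign condition on $\alpha+\sup g$, and, as you yourself concede, that supremum is not linear in $\myvector{w}_0$; the proposed ``descent through degenerate cases'' is asserted rather than carried out, and there is no reason it should terminate in linear constraints. Worse, if your target statement held, substituting a concrete rational $\myvector{w}_0$ would decide whether a single given trajectory is eventually confined to a half-space --- a question tied to the continuous Skolem problem, which the paper explicitly notes is not known to be decidable. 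This strongly suggests your route cannot be completed as described.

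The paper instead exploits the existential quantifier over starting points: it never characterises $T$ or $\mathit{ET}$, only decides their non-emptiness. The decisive step (\cref{conj-relation}, \cref{cor:liminf}, and the first half of the proof of the second lemma) is that if $\myvector{x}_0$ is trapped then its projection $\myvector{x}_0^{r}$ onto the real generalised eigenspace $\mathcal{V}^{r}$ is eventually trapped: by \cref{cor:liminf} the dominant term of the complex component is negative (boundedly away from $0$, after normalisation) infinitely often, so it can neither strictly dominate the real component's dominant term nor share its dominant pair without forcing the real leading coefficient $c$ to be strictly positive; in every case $c>0$, so $\myvector{x}_0^{r}\in\mathit{ET}$. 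Hence $\mathit{ET}\neq\emptyset$ iff $\mathit{ET}\cap\mathcal{V}^{r}\neq\emptyset$, and on $\mathcal{V}^{r}$ there are no oscillatory terms at all: eventual membership in a half-space is decided by the sign of the $\prec$-maximal non-vanishing coefficient $\myvector{u}_{(\eta,j)}^{T}\myvector{x}$, which yields directly the explicit union of polytopes $\mathcal{B}\cap\mathcal{C}$ with canonically represented algebraic coefficients. You need this restriction to the real eigenspace (or an equivalent device for discarding the complex spectrum) for the reduction to linear programming to go through.
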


A $d$-dimensional instance of the Polytope Escape Problem is a pair
$(f,\mathcal{P})$, where $f:\Reals^{d}\rightarrow \Reals^{d}$
is an affine function and $\mathcal{P}\subseteq\Reals^{d}$ is a
convex polytope. In this formulation we assume that all numbers
involved in the definition of $f$ and $\mathcal{P}$ are
rational.\footnote{The assumption of rationality is required to
  justify some of our complexity claims (e.g., Jordan Canonical Forms
  are only known to be polynomial-time computable for matrices with
  rational coordinates). Nevertheless, our procedure remains valid in
  a more general setting, and in fact, the overall
  $\exists \Reals$ complexity of our algorithm would not be
  affected if one allowed real algebraic numbers when defining problem
  instances.}

An instance $(f,\mathcal{P})$ of the Polytope Escape Problem is said
to be \emph{homogeneous} if $f$ is a linear function and
$\mathcal{P}$ is a convex polytope cone (in particular,
$\myvector{x}\in\mathcal{P},\alpha>0\Rightarrow \alpha\myvector{x}
\in\mathcal{P}$).

The restriction of the Polytope Escape Problem to homogeneous
instances is called the homogeneous Polytope Escape Problem.

\begin{lemma}
  The Polytope Escape Problem is polynomial-time reducible to the
  homogeneous Polytope Escape Problem.
\end{lemma}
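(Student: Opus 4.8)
The plan is to carry out the usual \emph{homogenization} that converts an affine continuous system into a linear one by adjoining one coordinate recording the ``affine level.'' Write $f(\myvector{x}) = A\myvector{x}+\myvector{b}$ with $A\in\Rationals^{d\times d}$ and $\myvector{b}\in\Rationals^{d}$, and write $\mathcal{P} = \{\myvector{x}\in\Reals^{d} : B\myvector{x}\le\myvector{c}\}$ with $B,\myvector{c}$ rational. I would send the instance $(f,\mathcal{P})$ to the instance $(A',\mathcal{P}')$ in $\Reals^{d+1}$ defined by
\[
  A' \;=\; \begin{pmatrix} A & \myvector{b}\\ \myvector{0}^{T} & 0 \end{pmatrix},
  \qquad
  \mathcal{P}' \;=\; \bigl\{\, (\myvector{x},y)\in\Reals^{d+1} \;:\; B\myvector{x}\le y\myvector{c},\ y>0 \,\bigr\} \, .
\]
Here $A'$ is a linear map and $\mathcal{P}'$ is a convex polytope cone (it is convex and stable under multiplication by strictly positive scalars), so $(A',\mathcal{P}')$ is a legitimate homogeneous instance; it is plainly computable from $(f,\mathcal{P})$ in polynomial time, merely rearranging the given rational data. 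The substance of the lemma is then that $(f,\mathcal{P})$ and $(A',\mathcal{P}')$ are equivalent instances.

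The key remark is that along any solution of $\dot{\myvector{w}}=A'\myvector{w}$ the last coordinate $y$ is constant, since the last row of $A'$ vanishes; and on the hyperplane $\{y=1\}$ the first $d$ coordinates evolve exactly according to $\dot{\myvector{x}}=A\myvector{x}+\myvector{b}=f(\myvector{x})$, while $\mathcal{P}'\cap\{y=1\}$ is a faithful copy of $\mathcal{P}$. Consequently $\myvector{x}_{0}$ is trapped for $(f,\mathcal{P})$ if and only if $(\myvector{x}_{0},1)$ is trapped for $(A',\mathcal{P}')$; in particular, if $(f,\mathcal{P})$ is a yes-instance then so is $(A',\mathcal{P}')$.

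For the converse I would take an arbitrary trapped point $(\myvector{z}_{0},y_{0})\in\mathcal{P}'$, which by definition of $\mathcal{P}'$ has $y_{0}>0$. Because $\mathcal{P}'$ is invariant under positive scaling and $A'$ is linear, the solution through $\tfrac{1}{y_{0}}(\myvector{z}_{0},y_{0})$ is exactly $\tfrac{1}{y_{0}}$ times the solution through $(\myvector{z}_{0},y_{0})$ and hence also stays in $\mathcal{P}'$; thus $(\myvector{z}_{0}/y_{0},1)$ is trapped and sits on $\{y=1\}$, so by the previous paragraph $\myvector{z}_{0}/y_{0}\in\mathcal{P}$ is trapped for $(f,\mathcal{P})$. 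This establishes the equivalence, and the same argument works verbatim if $\mathcal{P}$ is instead presented by its vertices, homogenizing each vertex $\myvector{v}$ to $(\myvector{v},1)$ and taking positive combinations.

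The delicate point --- and the one I expect to be the real obstacle --- is the fibre over $y=0$, which is precisely why $\mathcal{P}'$ is taken to be the relatively open cone $\{B\myvector{x}\le y\myvector{c},\ y>0\}$ rather than its closure. The origin of $\Reals^{d+1}$ is a fixed point of $A'$, so if it belonged to $\mathcal{P}'$ it would be vacuously trapped, making \emph{every} homogeneous instance a yes-instance and destroying the reduction. Discarding it is legitimate exactly because a convex polytope cone is only required to be closed under strictly positive scalars, as in the definition of homogeneous instance above. As a consistency check one may also observe that, $\mathcal{P}$ being bounded, its recession cone $\{\myvector{x}:B\myvector{x}\le\myvector{0}\}$ reduces to $\{\myvector{0}\}$, so no nonconstant trajectory of $A'$ could ever remain within a slice $\{y=0\}$ in the first place. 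Everything else is routine.
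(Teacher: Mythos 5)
Your proposal is correct and follows essentially the same homogenization as the paper's own proof: adjoin a coordinate $y$, extend $A$ with an extra zero row and the column $\myvector{b}$, homogenize the defining inequalities of $\mathcal{P}$, impose $y>0$, and argue the equivalence of trapped points via the constancy of $y$ along trajectories and invariance of $\mathcal{P}'$ under positive rescaling. The only cosmetic difference is that you write $\mathcal{P}$ with non-strict inequalities $B\myvector{x}\le\myvector{c}$ only, whereas the paper carries a mixed description $B_1\myvector{x}>\myvector{b}_1 \wedge B_2\myvector{x}\ge\myvector{b}_2$ and folds the $y>0$ constraint into the strict block; homogenizing the strict part is the same routine step, so this does not affect the argument.
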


\begin{proof}
  Let $(f,\mathcal{P})$ be an instance of the Polytope Escape
  Problem in $\Reals^{d}$, and write
\begin{equation*}
f(\myvector{x})=A\myvector{x}+\myvector{a}\mbox{ and } \mathcal{P}=\lbrace\myvector{x}\in\Reals^{d}: B_{1}\myvector{x}>\myvector{b}_{1} \wedge B_2\myvector{x}\geq \myvector{b}_{2}\rbrace \, .
\end{equation*}
Now define
\begin{align*}
& A'=\begin{pmatrix}A & \myvector{a}\\ \myvector{0}^T & 0\end{pmatrix},
B_{1}'=\begin{pmatrix}B_{1} & -\myvector{b}_{1} \\ \myvector{0}^T & 1\end{pmatrix},
B_{2}'=\begin{pmatrix}B_{2} & -\myvector{b}_{2} \end{pmatrix} \, ,\\
& \mathcal{P}'=
\left\lbrace
\begin{pmatrix}\myvector{x}\\y
\end{pmatrix} \in\Reals^{d+1}:B_{1}'\begin{pmatrix}\myvector{x}\\y
\end{pmatrix}
  > \myvector{0}
  \wedge B_{2}'\begin{pmatrix}\myvector{x}\\y
\end{pmatrix} \geq \myvector{0} \right\rbrace \, ,
\intertext{and}
& g\begin{pmatrix}\myvector{x}\\y
\end{pmatrix} = A'\begin{pmatrix}\myvector{x}\\y
\end{pmatrix} \, .
\end{align*}
Then $(g,\mathcal{P}')$ is a homogeneous instance
of the Polytope Escape Problem.

It is clear that $\myvector{x}(t)$ satisfies the differential equation
$\dot{\myvector{x}}(t)=f(\myvector{x}(t))$ if and only if
$\begin{pmatrix}
\myvector{x}(t)\\1 \end{pmatrix}$ satisfies
the differential equation
$\begin{pmatrix}\dot{\myvector{x}}\\
\dot{y}
\end{pmatrix} = g\begin{pmatrix}\myvector{x}\\y
\end{pmatrix}=  = \begin{pmatrix} A\myvector{x}+y\myvector{a}\\ 0
\end{pmatrix}$.  In general,
in any trajectory $\begin{pmatrix}\myvector{x}\\y
\end{pmatrix}$ that satisfies this last differential equation, the
$y$-component must be constant.

We claim that $(f,\mathcal{P})$ is a positive instance
of the Polytope Escape Problem if and
only if $(g,\mathcal{P}')$ is a positive instance.
Indeed, if the point
$\myvector{x}_0 \in \Reals^{d}$ is trapped
in $(f,\mathcal{P})$
then the point $\begin{pmatrix}\myvector{x}_0\\1
\end{pmatrix}$ is trapped in $(g,\mathcal{P}')$.
Conversely, suppose that $\begin{pmatrix}\myvector{x}_0\\y_0
\end{pmatrix}$ is trapped in  $(g,\mathcal{P}')$.  Then,
since $B_{1}'\begin{pmatrix}\myvector{x}_0\\y_0
\end{pmatrix}
> \myvector{0}$,
we must have $y_0>0$.  Scaling, it follows that
$\begin{pmatrix} y_0^{-1}\myvector{x}_0\\1
\end{pmatrix}$ is also trapped in $(g,\mathcal{P}')$.  This implies
that $y_0^{-1}\myvector{x}_0$ is trapped in $(f,\mathcal{P})$.
%
\end{proof}

We remind the reader that the unique solution of the differential
equation
$\dot{\myvector{x}}(t)=f(\myvector{x}(t)),\myvector{x}(0)=\myvector{x}_{0},t\geq
0$, where $f(\myvector{x})=A\myvector{x}$, is given by
$\myvector{x}(t)=\exp(At)\myvector{x}_{0}$.
In this setting, the sets of trapped and eventually trapped points are, respectively:
\begin{align*}
\mathit{T}&=\lbrace \myvector{x}_0\in\Reals^{d}: \forall t\geq 0, \exp(At)\myvector{x}_{0} \in \mathcal{P}\rbrace \\
\mathit{ET}&=\lbrace\myvector{x}_{0}\in\Reals^{d}:\exists t\geq 0,\exp(At)\myvector{x}_{0}\in\mathit{T}\rbrace
\end{align*}
Note that both $\mathit{T}$ and $\mathit{ET}$ are convex subsets of $\Reals^{d}$.

\begin{lemma}
  The homogeneous Polytope Escape Problem is polynomial-time
  reducible to the decision version of linear programming with
  algebraic coefficients.
\end{lemma}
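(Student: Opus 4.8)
The plan is to reduce the homogeneous Polytope Escape Problem to an emptiness (satisfiability) question for a system of linear inequalities with algebraic coefficients, by giving an explicit finite linear description of the trapped set $T$. Since $T$ is a convex cone, a point $\myvector{x}_0$ is trapped iff $\exp(At)\myvector{x}_0\in\mathcal{P}$ for all $t\ge 0$, i.e.\ iff for each defining constraint $\myvector{b}^T\myvector{x}\sim b$ of $\mathcal{P}$ (with $b=0$ by homogeneity, and $\sim\,\in\{>,\ge\}$) the scalar function $h(t)=\myvector{b}^T\exp(At)\myvector{x}_0$ satisfies $h(t)\sim 0$ for all $t\ge 0$. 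The key idea is that whether $h(t)\ge 0$ (resp.\ $>0$) for all $t$ depends only on the \emph{dominant term} of $h$ and, recursively, on lower-order terms; and by \cref{cor:liminf} and \cref{thm:liminf} the sign behaviour of the oscillatory (complex-eigenvalue) part is governed by a $\liminf$ that is either identically zero or strictly negative. Crucially, that dichotomy is a \emph{linear} condition on $\myvector{x}_0$: ``the relevant Laurent polynomial is identically zero'' translates to vanishing of certain linear functionals of $\myvector{x}_0$.

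The steps I would carry out are as follows. First, compute the Jordan form $A=Q^{-1}JQ$ in polynomial time (via \cite{Cai94}), obtaining the generalised eigenspace decomposition and, for each eigenvalue $\lambda=\eta+i\theta$, the algebraic data needed to write $h(t)=\myvector{b}^T\exp(At)\myvector{x}_0$ in closed form as $\sum_{\eta}\sum_{j} t^j e^{\eta t} g_{(\eta,j)}(e^{i\myvector{\theta}_\eta t})$, where the coefficients of the simple self-conjugate Laurent polynomials $g_{(\eta,j)}$ are \emph{linear} (over the algebraic numbers) in the coordinates of $\myvector{x}_0$ expressed in the Jordan basis. Second, for the strict constraint $h(t)>0$ for all $t\ge 0$: I would argue by induction on $\prec$-order of terms. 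Let $(\rho,m)$ be the dominant pair with $g_{(\rho,m)}\not\equiv 0$. If $\rho>0$ or ($\rho=0$, oscillatory part nontrivial): by \cref{cor:liminf} applied to the purely imaginary part at level $\rho$, $\liminf_{t\to\infty} g_{(\rho,m)}(e^{i\myvector{\theta}_\rho t})$ is either $0$ (only if $g_{(\rho,m)}$ has a surviving constant term, which for a \emph{simple} self-conjugate Laurent polynomial with all $\theta$ nonzero cannot happen unless $g_{(\rho,m)}\equiv0$) or strictly negative — so $h$ eventually goes negative, violating the constraint; hence we must force $g_{(\rho,m)}\equiv 0$, a linear condition, and recurse on the next term. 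If instead the dominant $\rho$ has $\rho<0$ or the dominant term is the constant one ($\rho=0$, $m=0$, no oscillation), the sign of $h$ at infinity is fixed by that term's (real, linear-in-$\myvector{x}_0$) coefficient; I then handle behaviour near $t=0^+$ by a finite further case analysis on the lowest-order terms. Third, assemble: the condition ``$\myvector{x}_0\in T$'' becomes a finite conjunction of linear equalities (the forced-vanishing conditions) and linear (strict/non-strict) inequalities (the sign conditions on the surviving dominant coefficients), all with algebraic coefficients computable in polynomial time by the algebraic-number machinery of the previous section. Finally, $(A,\mathcal{P})$ is a positive instance iff this linear system is satisfiable, which is precisely the decision version of linear programming with algebraic coefficients.

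The main obstacle I anticipate is the careful bookkeeping of the \emph{near-$t=0$} versus \emph{near-$t=\infty$} behaviour and making sure the recursion on the $\prec$-order terminates with a genuinely \emph{linear} characterisation. The asymptotic ($t\to\infty$) direction is clean: the dominant term dictates everything, and \cref{cor:liminf} gives exactly the needed trichotomy (identically zero $\Rightarrow$ recurse; else $\liminf<0$ $\Rightarrow$ fails unless it is the top real exponent with strictly positive real coefficient). The subtlety is that requiring $h(t)>0$ for \emph{all} $t\ge 0$, not just large $t$, means that after we have pinned down the dominant real exponent and forced all strictly-larger oscillatory levels to vanish, we still have finitely many remaining terms whose combined sign must be positive on all of $[0,\infty)$; bounding the number of such terms and expressing their joint positivity as a linear constraint on $\myvector{x}_0$ (possibly after subdividing $[0,\infty)$ at finitely many algebraic points, or by a monotonicity/derivative argument) is the part that needs the most care. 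A secondary technical point is ensuring all the eigenvalue data — real and imaginary parts $\eta,\theta$, and the coefficients of the $g_{(\eta,j)}$ as linear forms in $\myvector{x}_0$ — are produced in canonical algebraic-number representation of polynomially bounded size, which follows from Mignotte's bound and the polynomial-time arithmetic on algebraic numbers recalled earlier, but must be stated explicitly to justify the polynomial-time claim.
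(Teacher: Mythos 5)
There is a genuine gap. Your plan tries to give a \emph{finite linear description of the trapped set $T$ itself}, but $T$ need not be a polytope, and two specific steps in your recursion break. First, the claim that a trapped $\myvector{x}_0$ must have $g_{(\rho,m)}\equiv 0$ at the dominant level is false: if the real part contributes a coefficient $c>0$ at the same dominant pair $(\rho,m)$, the point can remain trapped with a non\-vanishing oscillatory component as long as $c$ exceeds $\lvert\liminf g_{(\rho,m)}\rvert$; and that last condition is \emph{not} a linear constraint on $\myvector{x}_0$, since it involves the minimum of a Laurent polynomial on a torus as a function of its coefficients. So your ``forced-vanishing'' conjunction describes a proper subset of $T$, and you have not argued that this subset is nonempty exactly when $T$ is. Second, the part you flag as ``needs the most care''---expressing positivity of an exponential polynomial on all of $[0,\infty)$, including near $t=0$, as a linear condition on $\myvector{x}_0$---is not merely delicate; it is not a linear condition at all (it is semi-algebraic in the coefficients), and no amount of subdividing $[0,\infty)$ at fixed algebraic points repairs this, because the relevant crossing points themselves move with $\myvector{x}_0$.

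The idea you are missing is the paper's pivot from $T$ to the set $\mathit{ET}$ of \emph{eventually} trapped points intersected with the real eigenspace $\mathcal{V}^r$. The paper first proves that if $\myvector{x}_0\in T$ then its $\mathcal{V}^r$-component $\myvector{x}_0^{r}$ lies in $\mathit{ET}$ (this is where \cref{cor:liminf} and the case analysis on dominant pairs are actually used), so $T\neq\emptyset$ iff $\mathit{ET}\cap\mathcal{V}^r\neq\emptyset$. The set $\mathit{ET}\cap\mathcal{V}^r$ is then easy: restricting to $\mathcal{V}^r$ kills all oscillatory terms, so $h(t)=\myvector{b}^T\exp(At)\myvector{x}$ becomes a purely real exponential polynomial, and ``eventually in $\mathcal{P}$'' only constrains behaviour for large $t$, which is governed solely by the $\prec$-dominant nonzero coefficient. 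That yields a finite Boolean combination of linear equalities and inequalities with algebraic coefficients (the paper's $\mathcal{B}$ and $\mathcal{C}$) with no near-$t=0$ analysis and no need to force oscillatory terms to vanish in full generality. This projection-plus-eventually step is precisely what makes the reduction linear rather than semi-algebraic; without it, the characterisation you propose cannot go through.
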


\begin{proof}
  Let
  $\myvector{x}_{0}=\myvector{x}_{0}^{r}+\myvector{x}_{0}^{c}$,
  where $\myvector{x}_{0}^{r}\in \mathcal{V}^{r}$ and
  $\myvector{x}_{0}^{c}\in \mathcal{V}^{c}$. We start by showing
  that if $\myvector{x}_{0}$ lies in the set $T$ of trapped points
  then its component $\myvector{x}_{0}^{r}$ in the real eigenspace
  $\mathcal{V}^{r}$ lies in the set $\mathit{ET}$ of eventually
  trapped points.
Due to the
fact that the intersection of finitely many convex polytopes is still
a convex polytope, it suffices to prove this claim for the case when
  $\mathcal{P}$ is defined by a single inequality---say
  $\mathcal{P}=\lbrace \myvector{x}\in\Reals^{d}:
  \myvector{b}^{T}\myvector{x}\triangleright 0\rbrace$, where
  $\triangleright$ is either $>$ or $\geq$.

  We may assume that
  $\myvector{b}^{T} \exp(At) \myvector{x}_{0}^{c}$ is not identically
  zero, as in that case
  \[\myvector{b}^{T} \exp(At)\myvector{x}_{0} \equiv
  \myvector{b}^{T} \exp(At)\myvector{x}_{0}^{r}\]
  and our claim holds trivially.
  Also, if $\myvector{x}_{0}\in\mathit{T}$, it cannot hold that
  \[\myvector{b}^{T} \exp(At) \myvector{x}_{0}^{r} \equiv 0 \, ,\]
  since $\myvector{b}^{T} \exp(At) \myvector{x}_{0}^{c}$ is negative
  infinitely often by \cref{cor:liminf}.

  Suppose that $\myvector{x}_{0}\in\mathit{T}$ and let $(\rho,m)$
  and $(\eta,j)$ be the dominant indices for
  $\myvector{b}^{T} \exp(At) \myvector{x}_{0}^{r}$ and
  $\myvector{b}^{T} \exp(At) \myvector{x}_{0}^{c}$ respectively.
Then by \cref{prop:linear} we have
\begin{gather}
\myvector{b}^{T} \exp(At) \myvector{x}_{0}^{r} = \exp(\rho t)t^m (c
  + o(1))
\label{eq:real}
\end{gather}
 as $t \rightarrow \infty$, where $c$ is a non-zero real
number.  We will show that $c>0$, from which it follows  that
$\myvector{x}_{0}^{r} \in \mathit{ET}$.



It must hold that $(\eta,j)\preceq (\rho,m)$.  Indeed, if $(\eta,j)\succ
(\rho,m)$, then, as $t\rightarrow\infty$,
\begin{align*}
\myvector{b}^{T} \exp(At) \myvector{x}_{0} = \exp(\eta t)t^{j} \Bigg(
\underbrace{\frac{\myvector{b}^{T} \exp(At) \myvector{x}_{0}^{c}}{\exp(\eta t)t^{j}}}_{\mytag{termA}} + o(1)\Bigg) ,
\end{align*}
but the limit inferior of the term~\ref{termA} above is strictly
negative by \cref{cor:liminf},
contradicting the fact that $\myvector{x}_{0}\in\mathit{T}$.

If $(\eta,j)=(\rho,m)$, then, as $t\rightarrow\infty$,
\begin{align*}
\myvector{b}^{T} \exp(At) \myvector{x}_{0} = \exp(\rho t)t^{m} \left( c + \frac{\myvector{b}^{T} \exp(At) \myvector{x}_{0}^{c}}{\exp(\rho t)t^{m}} + o(1) \right) ,
\end{align*}
and by invoking \cref{cor:liminf} as above, it follows that
$c > 0$.

Finally, if $(\eta,j)\prec (\rho,m)$, then, as $t\rightarrow\infty$,
\begin{gather}\myvector{b}^{T} \exp(At) \myvector{x}_{0}^c =
\exp(\rho t)t^{m} \cdot o(1) \, ,
\label{eq:complex}
\end{gather} and hence, by (\ref{eq:real}) and (\ref{eq:complex}), it
follows that
\begin{align*}
\myvector{b}^{T} \exp(At) \myvector{x}_{0} = \exp(\rho t)t^{m} \left( c +o(1) \right) \, .
\end{align*}
From the fact that $\myvector{x}_{0} \in T$ and that $c\neq 0$ we must have $c>0$.

In all cases it holds that $c>0$ and hence
$\myvector{x}_{0}^{r}\in\mathit{ET}$.



Having argued that
$\mathit{ET}\neq\emptyset$ iff $\mathit{ET}\cap
\mathcal{V}^{r}\neq\emptyset$,
we will now show that the set $\mathit{ET}\cap \mathcal{V}^{r}$ is a
convex polytope that we can efficiently compute. As before, it
suffices to prove this claim for the case when
$\mathcal{P}=\lbrace \myvector{x}\in\Reals^{d}:
\myvector{b}^{T}\myvector{x}\triangleright 0\rbrace$
(where $\triangleright$ is either $>$ or $\geq$).

In what follows, we let $[K]$ denote the set $\lbrace 0,\ldots,K-1 \rbrace$. We can write
\begin{align*}
\myvector{b}^{T}\exp(At)=\sum\limits_{(\eta,j)\in\sigma(A)\times
  [\nu(A)]} \exp(\eta t)t^{j} \myvector{u}_{(\eta,j)}^{T} \, ,
\end{align*}
where $\myvector{u}_{(\eta,j)}^{T}$ is a vector of coefficients.

Note that if $\myvector{x}\in\mathcal{V}^{r}$ and $(\eta,j)\in
(\sigma(A)\setminus \Reals)\times \Naturals_{0}$, then
$\myvector{u}_{(\eta,j)}^{T}\myvector{x}=0$, as
$\myvector{u}_{(\eta,j)}^{T}\myvector{x}$ is the coefficient of $t^{j} \exp(\eta
t)$ in $\myvector{b}^{T}\exp(At) \myvector{x}$, and $\mathcal{V}^{r}$ is
invariant under $\exp(At)$. Moreover,
\begin{align*}
\mathit{ET}\cap\mathcal{V}^{r}=(\mathcal{B}\cap\mathcal{C}) \cup
\begin{cases}
\lbrace \myvector{0} \rbrace & \text{ if } \triangleright \text{ is } \geq \\
\emptyset & \text{ if } \triangleright \text{ is } >
\end{cases}
\end{align*}
where
\begin{align*}
\mathcal{B}=&\bigcap\limits_{(\eta,j)\in (\sigma(A)\setminus \Reals)\times [\nu(A)]} \lbrace \myvector{x}\in\Reals^{d}: \myvector{u}_{(\eta,j)}^{T} \myvector{x}=0 \rbrace \\
\mathcal{C}=&\bigcup\limits_{(\eta,j)\in(\sigma(A)\cap \Reals)\times [\nu(A)]} \bigg[ \lbrace \myvector{x}\in\Reals^{d}: \myvector{u}_{(\eta,j)}^{T} \myvector{x}>0 \rbrace \cap \\
&\bigcap\limits_{(\rho,m)\succ (\eta,j)} \lbrace \myvector{x}\in\Reals^{d}: \myvector{u}_{(\rho,m)}^{T} \myvector{x}=0 \rbrace \bigg]
\end{align*}

The set $\mathit{ET}\cap\mathcal{V}^{r}$ can be seen to be convex from
the above characterisation. Alternatively, note that $\mathit{ET}$ can
be shown to be convex from its definition and that $\mathcal{V}^{r}$
is convex, therefore so must be their intersection. Thus
$\mathit{ET}\cap\mathcal{V}^{r}$ must be a convex polytope whose
definition possibly involves canonically-represented real algebraic
numbers, and the Polytope Escape Problem reduces to testing this
polytope for non-emptiness.
\end{proof}

\section{Conclusion}

We have shown that the Polytope Escape Problem for continuous-time
linear dynamical systems is decidable, and in fact, polynomial-time
reducible to the decision problem for the existential theory of real
closed fields.  Given an instance of the problem $(f,\mathcal{P})$,
with $f$ an affine map, our decision procedure involves analysing the real
eigenstructure of the linear operator
$g(\boldsymbol{x}):=f(\boldsymbol{x})-f(\boldsymbol{0})$. In fact, we
showed that all complex eigenvalues could essentially be ignored for
the purposes of deciding this problem.

Interestingly, the seemingly closely related question of whether a
given single trajectory of a linear dynamical system remains trapped
within a given polytope appears to be considerably more challenging
and is not known to be decidable. In that instance, it seems that the
influence of the complex eigenstructure cannot simply be discarded.

\bibliographystyle{abbrv}
\bibliography{refs}

\end{document}